\documentclass{article}

\usepackage[margin=1in]{geometry}
\usepackage{amssymb,amsmath}
\usepackage{amsthm}
\usepackage{graphicx}
\usepackage{subfigure}
\usepackage{bm}
\usepackage{cite}
\usepackage{color}
\usepackage[T1]{fontenc}

\newtheorem{lem}{Lemma}
\newtheorem{thm}{Theorem}
\newtheorem{cor}{Corollary}

\newtheorem{ass}{Assumption}

\providecommand{\norm}[1]{\left\lVert #1 \right\rVert}
\providecommand{\abs}[1]{\left\lvert #1 \right\rvert}
\providecommand{\proxprojection}[1]{\Pi_{\mathcal{W}}\left( #1 \right)}

\newcommand{\wbar}{\overline{w}}
\newcommand{\zbar}{\overline{z}}
\newcommand{\gbar}{\overline{g}}
\newcommand{\ghat}{\widehat{g}}
\newcommand{\defeq}{\stackrel{\text{def}}{=}}
\newcommand{\constraintset}{\mathcal{W}}
\newcommand{\R}{\mathbb{R}}
\newcommand{\E}{\mathbb{E}}
\newcommand{\what}{\widehat{w}}
\newcommand{\order}[1]{\mathcal{O}\left( #1 \right)}

\newcommand{\lhat}{\hat{l}}
\newcommand{\htilde}{\tilde{h}}
\newcommand{\ybar}{\overline{y}}
\newcommand{\AllReduce}{\textsc{AllReduce} }
\newcommand{\DistributedAveraging}{\textsc{DistributedAveraging}}
\newcommand{\Gossip}{\textsc{Gossip}}
\newcommand{\gradient}{\nabla_{w}}
\newcommand{\xhat}{\widehat{x}}
\newcommand{\wbarhat}{\widehat{\overline{w}}}
\renewcommand{\deg}{\operatorname{deg}}

\newcommand{\Exp}[2]{\E\left[ #2 \right]}
\newcommand{\dprod}[2]{\left\langle #1, #2 \right\rangle}

\begin{document}

\title{\bf Efficient Distributed Online Prediction and Stochastic Optimization with Approximate Distributed Averaging}

\author{Konstantinos~I.~Tsianos and Michael~G.~Rabbat\\
 \\
Department of Electrical and Computer Engineering\\
McGill University, Montr\'{e}al, Qu\'{e}bec, Canada\\
Email: konstantinos.tsianos@gmail.com, michael.rabbat@mcgill.ca}

\maketitle

\begin{abstract}
We study distributed methods for online prediction and stochastic optimization. Our approach is iterative: in each round nodes first perform local computations and then communicate in order to aggregate information and synchronize their decision variables. Synchronization is accomplished through the use of a distributed averaging protocol. When an exact distributed averaging protocol is used, it is known that the optimal regret bound of $\order{\sqrt{m}}$ can be achieved using the distributed mini-batch algorithm of Dekel et al.~(2012), where $m$ is the total number of samples processed across the network. We focus on methods using approximate distributed averaging protocols and show that the optimal regret bound can also be achieved in this setting. In particular, we propose a gossip-based optimization method which achieves the optimal regret bound. The amount of communication required depends on the network topology through the second largest eigenvalue of the transition matrix of a random walk on the network. In the setting of stochastic optimization, the proposed gossip-based approach achieves nearly-linear scaling: the optimization error is guaranteed to be no more than $\epsilon$ after $\order{ \frac{1}{n \epsilon^2} }$ rounds, each of which involves $\order{\log n}$ gossip iterations, when nodes communicate over a well-connected graph. This scaling law is also observed in numerical experiments on a cluster.
\end{abstract}

\section{Introduction}

In order to scale up to very large optimization problems, it is necessary to use distributed methods. Such large-scale problems frequently arise in the context of machine learning. For example, when one wishes to fit a model to a very large (static) dataset, the data can be partitioned across multiple machines and the aim is to obtain an accurate solution as fast as possible by leveraging parallel processing. Alternatively, when data samples arrive to the system in a streaming manner, as in the setting of online learning, and the model is progressively updated after each sample is observed, a distributed system may be required when the rate at which samples arrive is faster than a single machine can handle. 

In either the static or online setting, the nodes in the distributed system must communicate in order to coordinate their computation, and it is important to understand the extent to which this communication affects the overall scaling performance of any method. In many algorithms, each node maintains and updates its own copy of the decision variables, and communication is performed periodically to synchronize the variables across the network. Typically this synchronization operation simplifies to distributed averaging: each node initially holds a local vector and the goal is for all nodes to compute (exactly or approximately) the average of these vectors.

This paper proposes and studies a method for distributed online prediction and stochastic optimization using approximate distributed averaging to coordinate and synchronize the values at different nodes. Previous work has shown that methods using exact distributed averaging achieve the optimal regret bounds for distributed online prediction. When an approximate distributed averaging protocol is used, there is some residual synchronization error in each node's decision variables. The magnitude of this error is inversely related to the amount of communication. We present a general framework for online prediction and stochastic optimization using approximate distributed averaging protocols, and we characterize the performance requirements (in terms of accuracy and latency) that a distributed averaging protocol must satisfy in order to achieve optimal rates. Then we focus on gossip protocols for distributed averaging and precisely characterize the number of gossip iterations that must be performed in order to control the error and meet the requirements mentioned above. Gossip protocols have a number of attractive features, including that they are simple to implement, robust to communication failures, and they can naturally be implemented in an asynchronous manner. For this reason, gossip-based distributed optimization methods have recently received considerable attention~\cite{nedicDistributedOptimization,distrStochSubgrOpt,dualAveragingTAC,FastDistributedGradMethods,TsianosCDC2012,Chen2012}.

Before giving a detailed statement of our contributions, we describe the problems of online prediction and stochastic optimization.

\subsection{Online Prediction}
\label{subsec:onlinePrediction}

In online prediction, a system must sequentially make predictions. We consider the setting where the set of possible predictors $\constraintset$ is a compact convex subset of a real Euclidean vector space. After the system makes a prediction $w(t) \in \constraintset$ for $t \ge 1$, a sample $x(t) \in \mathcal{X}$ is revealed, and the system suffers a loss $f(w(t), x(t))$. The random samples $x(t)$ are assumed to be drawn independently and identically from an unknown distribution over $\mathcal{X}$. After observing $x(t)$ and suffering the corresponding loss, the system updates its predictor to $w(t+1)$, receives a new sample $x(t+1)$, suffers the additional loss $f(w(t+1), x(t+1))$, and the process continues.

The goal of an online prediction system is to minimize the loss accumulated over time. Specifically, after a total of $m$ samples have been processed, the regret accrued by a single-processor online prediction system is given by
\begin{equation}
R_1(m) \defeq \sum_{t=1}^m [ f(w(t), x(t)) - f(w^*, x(t)) ] \;, \label{eqn:R_1}
\end{equation}
where $w^* \defeq \arg\min_{w \in \constraintset} \E_x[ f(w, x) ]$ is the predictor that minimizes the expected regret with respect to the unknown distribution over the samples $\{x(t)\}_{t \ge 1}$. When the loss function is convex and smooth (i.e., $f$ has Lipschitz-continuous gradients), it is known that no first-order method can achieve a bound on the expected regret which is better than $\E[R_1(m)] = \order{ \sqrt{m} }$ \cite{NemirovskyYudinBook,CesaBianchiBook,ShwartzOnlineBook}. At the same time, this performance is achieved by a number of methods~\cite{ShwartzOnlineBook}, including Nesterov's dual averaging algorithm~\cite{nesterovDualOpt}.

Applications of online prediction methods are typically those where predictions must be made in real-time when a query arrives to the system. When queries arrive at a rate which is faster than a single processor can handle then a distributed online prediction system must be used. Consider a general system of $n$ nodes. Each node receives a proportion (say $1/n$) of the queries arriving to the system and makes predictions. After making a prediction $w_i(t)$, node $i$ observes a sample $x_i(t)$ and suffers a loss $f(w_i(t), x_i(t))$. The nodes can individually update their predictors based on their local observations (e.g., using $\gradient f(w_i(t), x_i(t))$), and they can also communicate with other nodes to obtain information about their predictions and observations. 

Consider a general framework where, in the $t$-th round, the network collectively makes a batch of $b \ge n$ predictions $\{w_i(t,s) \colon 1\le i \le n, 1 \le s \le b\}$, where $b$ is an integer multiple of $n$. After making the prediction $w_i(t,s)$, node $i$ receives an observation $x_i(t,s)$, and suffers the corresponding loss. Once the entire batch is processed, the nodes perform some communication, and then the next round begins. During the time that the nodes are communicating, additional queries may arrive to the network, and so the nodes continue to make predictions using their most recent predictor. Let $\mu$ denote the number of additional samples that arrive to the system while nodes are communicating ($\mu / n$ samples arrive at each node). For the general framework just described, performance is measured in terms of the regret
\begin{equation}
R_n(m) \defeq \sum_{i=1}^n \sum_{t=1}^{\frac{m}{b + \mu}} \left(\sum_{s=1}^{\frac{b}{n}} [f(w_i(t,s), x_i(t,s)) - f(w^*, x_i(t,s))] + \sum_{s'=1}^{\frac{\mu}{n}} [f(w_i(t,s), x_i(t,s')) - f(w^*, x_i(t,s'))] \right)\;. \label{eqn:R_nlong}
\end{equation}
The terms in the sum over $s$ correspond to the regret suffered while locally processing samples which will be used to update the prediction, and we have precise control over the number of terms in this sum since we can specify $b$. On the other hand, the terms in the sum over $s'$ correspond to the regret accumulated while communication takes place, and the only way to control this portion of the regret is to limit the time spent communicating.

Clearly, we cannot hope to obtain a better regret bound using $n$ processors than would be obtained using a single processor; i.e., $\E[R_n(m)] = \order{ \sqrt{m} }$.\footnote{One might hope to achieve this rate if communication was arbitrarily faster than the rate at which queries arrive to the network and are processed, in which case every processor could always share every sample with every other processor and they behave overall as a single processor observing all $m$ samples.} At the other extreme, consider a na\"{i}ve approach where nodes do not communicate at all. In this case, $\mu = 0$. However, since each node operates in isolation, it individually accrues a regret which is bounded by $R_1(m/n)$, the regret that a single processor would have accrued using only $m/n$ samples. Thus, the no-communication approach has an overall regret bound of $\E[R_n(m)] = \order{n \sqrt{m/n}} = \order{ \sqrt{n m} }$, which is a factor of $\sqrt{n}$ worse than what we hope to achieve~\cite{DekelMiniBatches}.

It has been shown that the constant hidden in the $\order{ \sqrt{m} }$ bounds for serial online prediction algorithms is proportional to the standard deviation of the random gradients $\gradient f(w, x)$ \cite{Lan2012,xiaoDA}. Based on this observation, Dekel et al.~\cite{DekelMiniBatches} propose the distributed mini-batch algorithm. Specifically, the predictor at each node is held fixed at $w_i(t,1)$ for all predictions in the $t$th round. Then, the gradients $\{\gradient f(w_i(t,1), x_i(t,s)) \colon 1\le i \le n, 1 \le s \le b/n\}$ can be averaged to reduce the standard deviation. Averaging the mini-batch of $b$ gradients across all nodes involves communication, and Dekel et al.~\cite{DekelMiniBatches} propose to use an exact distributed averaging protocol. Once the protocol returns, all nodes hold the exact average gradient, 
\[
\frac{1}{b} \sum_{i=1}^n \sum_{s=1}^{\frac{b}{n}} \gradient f(w_i(t,1), x_i(t,s)) \;.
\]
This distributed mini-batch procedure with exact distributed averaging is shown in~\cite{DekelMiniBatches} to achieve a regret bound of
\[
\E[R_n(m)] = \order{ b + \mu + \sqrt{\frac{b + \mu}{b} \cdot m} } \;.
\]
Thus, by choosing $b$ appropriately (i.e., taking $b = \Theta(m^\rho)$ for any $\rho \in (0, 1/2)$), the distributed mini-batch algorithm achieves an expected regret of $\E[R_n(m)] = \order{ \sqrt{m} }$. 

Although exact distributed averaging enables this approach to achieve the optimal regret rate, it also has some drawbacks. In particular, exact distributed averaging generally requires blocking communications or some form of barrier synchronization across the network. All nodes remain blocked until the operation completes, and consequently, the computation proceeds at the pace of the slowest node. This is undesirable in clusters, where infrastructure is prone to fail and where resources are typically shared by many users. Motivated by this observation, the aim of the present manuscript is to develop distributed methods which use approximate distributed averaging protocols that can be much more flexible and can be implemented to run asynchronously.

\subsection{Stochastic Optimization}
\label{subsec:stochasticOptimization}

The problem of stochastic optimization is closely related to that of online prediction. In stochastic optimization we seek to find a value $w \in \constraintset$ that solves the problem
\begin{align*}
\text{minimize} &\quad F(w) \defeq \E_x[f(w,x)] \\
\text{subject to} &\quad w \in \constraintset \;.
\end{align*}
A serial first-order stochastic optimization method sequentially makes stochastic gradient evaluations, observing $\gradient f(w(t),x(t))$, and uses these to update the decision variable $w(t)$. The performance of a single-processor method is measured by the optimality gap $\Delta_1(T) = \E[F(w(T))] - F(w^*)$ after $T$ updates.\footnote{Note that the expectation is taken in the definition of optimality gap because the predictor $w(T)$ is random since it is a function of the random observations $x(t)$, $t=1,\dots,T$.} When $f(w,x)$ is convex in $w$ for all $x \in \mathcal{X}$ and when the $x(t)$ are i.i.d., it is known~\cite{xiaoDA} that any online prediction algorithm that has an expected regret bound of $\E[R_1(m)] = \order{ \sqrt{m} }$ can be used for stochastic optimization and will achieve an optimality gap of $\Delta_1(T) \le \frac{1}{T} \E[R_1(T)] = \order{ \frac{1}{\sqrt{T}} }$.

Stochastic optimization is useful when one may wish to fit a model $w$ to a very large collection of data $\{x_i\}_{i=1}^N$ by minimizing the average loss $\frac{1}{N} \sum_{i=1}^N f(w, x_i)$ over the entire data set. Rather than performing gradient updates using all of the data, which can be very time-consuming when $N$ is large, a stochastic gradient is obtained by drawing one sample $x(t) = x_i$ independently and uniformly from the collection at each iteration, and using $\gradient f(w(t), x(t))$ as a surrogate for $\frac{1}{N} \sum_{i=1}^N \gradient f(w(t), x_i)$. Clearly, when $x(t)$ is drawn independently and uniformly over $\{x_i\}_{i=1}^N$, we have that 
\[
\E_{x(t)}[f(w(t), x(t))] = \frac{1}{N} \sum_{i=1}^N \gradient f(w(t), x_i) \;.
\]

The motivation to use a distributed system for solving stochastic optimization problems is often to obtain a solution faster than a single processor. In contrast to the online prediction setup, the system does not accrue any additional regret (or suffer an increase in the optimality gap) when communicating. On the other hand, it is of interest to study the scaling properties of distributed methods. For an optimal single-processor method to ensure the optimality gap is bounded by $\Delta_1(T) \le \epsilon$ for some $\epsilon > 0$, it must evaluate $T \propto \frac{1}{\epsilon^2}$ stochastic gradients. An ideal distributed method using $n$ nodes would obtain a linear speedup by guaranteeing the same level of accuracy after a time equivalent to what it would take the single processor to evaluate $\frac{1}{n \epsilon^2}$ stochastic gradients. To achieve this runtime scaling, a distributed method must not require excessive communication per round, relative to the time spent computing.

\subsection{Contributions}

This paper proposes and analyzes a method for online prediction and stochastic optimization. The proposed method builds on the dual averaging algorithm and the idea of using distributed mini-batch computations. Specifically, we consider a framework using general approximate distributed averaging protocols that ensure the dual variables at each node are synchronized to within accuracy $\delta$ after a latency $\mu(\delta)$. Our first contribution is a regret bound for this distributed framework which enables us to determine the relationship between $\delta$, $\mu(\delta)$, and the mini-batch size $b$ that is sufficient to ensure the optimal $\order{ \sqrt{m} }$ asymptotic regret bound is achieved.

Then, we focus specifically on using gossip protocols for approximate distributed averaging~\cite{randomizedGossip,gossipReview}. Our second contribution is a bound on the number of gossip iterations required to satisfy the relationship between the accuracy $\delta$, latency $\mu(\delta)$, and mini-batch size $b$, so that the optimal regret bound is achieved. In particular, the number of iterations scales as $\order{ \frac{\log(n)}{1 - \lambda_2(P)} }$, where $\lambda_2(P)$ is the second largest eigenvalue of the transition matrix $P$ of a random walk on the network topology over which messages are passed. When the network is well connected (e.g., when it is an expander~\cite{kRegExpanders}), then $\frac{1}{1 - \lambda_2(P)} = \Theta(1)$ as $n \rightarrow \infty$, and so the number of gossip iterations (which is proportional to the time spent communicating) is $\order{ \log(n) }$, which is no more than the time spent communicating per round if an exact distributed averaging protocol were used, up to constant factors.

Finally, we study the proposed approach in the setting of stochastic optimization. Specifically, we derive bounds on the optimality gap as a function of the accuracy $\delta$. We show that the number of rounds required to drive the optimality gap below $\epsilon$ (i.e., $\Delta_n(T) \le \epsilon$) is of the order $T = \order{ \frac{1}{n \epsilon^2} }$. If gossip is used for approximate distributed averaging, and if the network is an expander, then the number of gossip iterations per round is also of the order $\order{ \Theta(\log n) }$, and so the speedup relative to using a single processor is nearly linear. This speedup is also observed in our experiments.

\subsection{Paper Organization} 

The remainder of the paper is organized as follows. Section~\ref{sec:prob} introduces notation and states assumptions. Section~\ref{sec:distributedAveraging} discusses exact and approximate protocols for distributed averaging. Section~\ref{sec:results} describes the proposed algorithm and presents our main results. The proofs of all results are given in Section~\ref{sec:analysis}. Results of numerical experiments are presented in Section~\ref{sec:experiments}. We further discuss our results related to other work in the literature in Section~\ref{sec:discussion}, and we conclude in Section~\ref{sec:conc}.

\section{Notation and Assumptions}
\label{sec:prob}

Throughout this paper, for a vector $x \in \R^d$, $\norm{x}$ denotes the Euclidean (i.e., $\ell_2$) norm, $\norm{x} = x^T x$. The set of the first $n$ natural numbers is denoted bys $[n] \defeq \{1,2,\dots,n\}$. The $d$-dimensional vector with all entries equal to $0$ (resp.,~all equal to $1$) is written as $\mathbf{0}_d$ (resp.~$\mathbf{1}_d$).

We focus on the two problem settings described in Sections~\ref{subsec:onlinePrediction} and~\ref{subsec:stochasticOptimization} above. To re-iterate, the samples $x_i(t,s) \in \mathcal{X}$ arriving to the network are assume to be independent and identically distributed, and they are drawn from an unknown distribution over $\mathcal{X}$, where $\mathcal{X} \subseteq \R^d$ is a $d$-dimensional Euclidean space for some $d \ge 1$. The objective is to minimize the regret $R_n(m)$ accumulated after $m$ samples have been processed, in total, across the network. Our results and analysis below make use of the following assumptions.

\begin{ass} \label{ass:constraintSet}
The set $\constraintset$ is a closed, bounded, convex subset of $\R^d$ with diameter $D = \max_{v, w \in \constraintset} \|v - w\|$.
\end{ass}

\begin{ass} \label{ass:lossFunction} The loss function $f(w,x)$ has the following properties.
\renewcommand{\labelenumi}{\alph{enumi}.}
\begin{enumerate}
\item (Convexity) For all $x \in \mathcal{X}$, the loss function $f(w,x)$ is convex in $w$.
\item (Lipschitz continuity) There exists a constant $L \ge 0$ such that for all $x \in \mathcal{X}$ and $w_1, w_2 \in \constraintset$,
\[
\abs{f(w_1, x) - f(w_2, x)} \le L \norm{w_1 - w_2}.
\]

\item (Lipschitz continuous gradients) The loss function $f(w,x)$ is continuously differentiable with respect to $w$ for all $x \in \mathcal{X}$, and there exists a constant $K \ge 0$ such that for all $x \in \mathcal{X}$ and $w_1, w_2 \in \constraintset$,
\[
\norm{\gradient f(w_1, x) - \gradient f(w_2, x)} \le K \norm{w_1 - w_2}.
\]
\end{enumerate}
\end{ass}

Note that Assumptions~\ref{ass:constraintSet} and~\ref{ass:lossFunction} together imply that $f(w, x)$ also has bounded gradients; i.e., $\norm{\gradient f(w, x)} \le L$ for all  $w \in \constraintset$ and all $x\in \mathcal{X}$.

Let $F(w) = \E_x[f(w,x)]$. In each round, the algorithms we consider will make use of gradients $\gradient f(w, x)$. These can be viewed as stochastic or noisy versions of the gradient $\gradient F(w)$. The assumptions stated above (in particular, Lipschitz continuity of $f$ and $\gradient f$) on the loss function are sufficient to ensure that the gradient operator commutes with the expectation~\cite{RockafellarWetts}, and thus $\gradient f(w, x)$ is an unbiased estimate for $\gradient F(w)$. In addition, we assume that the gradients have bounded variance. In particular:

\begin{ass} \label{ass:boundedGradientVariance}
There exists a constant $\sigma^2 \in (0, \infty)$ such that
\[
\E_x\left[ \norm{\gradient f(w,x) - \gradient F(w)}^2 \right] \le \sigma^2.
\]
\end{ass}

After receiving data and suffering the loss in one round, node $i$ may send/receive messages to/from other nodes before updating its predictor, $w_i(t+1)$, at the beginning of the next round. Each node is restricted to only communicate with its neighbours in a graph $G = (V,E)$ where $V = [n]$. Throughout this work we assume that:

\begin{ass} \label{ass:graphConnected}
The graph $G$ is undirected and connected. 
\end{ass}

Of course, $G$ could be the complete graph (i.e., each node communicates with all the others). However, then the amount of communication required per node per round may not scale well with the size of the network, and so in general each node may only be neighbours with a few other nodes in the network.

\section{Distributed Averaging Protocols}
\label{sec:distributedAveraging}

Before proceeding to the description of the algorithmic framework considered in this paper, we discuss distributed averaging protocols, which will be used as a mechanism to coordinate and synchronize the decision variables at different nodes.

Consider a connected network of $n$ nodes where each node $i$ holds a vector $y_i \in \R^d$. The goal of a distributed averaging protocol is for all nodes in the network to compute the average vector, $\ybar = \frac{1}{n} \sum_{i=1}^n y_i$. Let 
\[
y_i^+ = \DistributedAveraging(y_i, i)
\]
denote a function called at node $i$ to invoke a distributed averaging protocol. The first argument is the vector, $y_i \in \R^d$, passed to the routine for averaging, and the second argument indicates this is the instance invoked at node $i$. The function returns a vector $y_i^+ \in \R^d$ as the output at node $i$.

In general, the output $y_i^+$ may not be identical at all nodes, and the quality of a distributed averaging protocol is measured through two metrics: the \emph{accuracy} of the protocol and the \emph{latency} of the protocol. A distributed averaging protocol is said to be have accuracy $\delta \ge 0$ if it guarantees that, for all $i$,
\[
\norm{ y_i^+ - \overline{y} } \le \delta
\]
when the protocol returns. In the context of distributed online prediction, the latency of a distributed averaging protocol is measured with respect to the rate at which observations arrive to the network (and hence, the rate at which predictions must be made). A distributed averaging protocol is said to have latency $\mu > 0$ if a total of $\mu$ observations arrive to the network (i.e., $\mu / n$ observations at each node) during the time it takes for the protocol to return a value $y_i^+$ at each node. A $(\delta, \mu)$-distributed averaging protocol is one which guarantees that accuracy $\delta$ is achieved across the network while requiring latency at most $\mu$ to complete.

Typically the latency of a distributed averaging protocol is related to the size and structure of the network, as well as the accuracy required. In particular, we will be interested in how the latency scales both as a function of the accuracy required and as a function of the size of the network. Next we give examples of exact and approximate distributed averaging protocols.

\subsection{Exact Distributed Averaging with \AllReduce }
One way to exactly achieve distributed averaging in a network is to pass messages along a rooted spanning tree. Starting from the leaves of the spanning tree and assuming that all nodes know the size $n$ of the network, nodes pass a message containing $y_i / n$ to their parents in the tree. Once a node has received messages from all of its children, it computes the sum of the values received in these messages and also adds its own value, normalized by the size of the network, and then it passes a new message with this intermediate aggregate to its parent. This process repeats until the root node has received messages from all of its children. After the root sums the values received from its children together with its own (appropriately normalized) value, it holds the average $\ybar$. Then $\ybar$ can be broadcast down the spanning tree to all nodes, and the protocol terminates when all nodes have received $\ybar$. 

The operation just described corresponds to the \AllReduce primitive which is standard in existing message passing libraries for distributed computing~\cite{mpi}. It is clear that \AllReduce achieves an the accuracy of $\delta = 0$, since all nodes hold the exact average when the protocol terminates. The latency of \AllReduce is proportional to the depth of the spanning tree and so, assuming $G$ admits a balanced spanning tree where the branching factor that is bounded independent of $n$, the latency is of the order $\mu = \Theta( \log n )$.

Exact distributed averaging using \AllReduce is an attractive option when nodes communicate over a network which is stable and reliable. However, when nodes or links may fail, or when the communication channels between neighbouring nodes are unreliable, then there may be considerable overhead involved in establishing and maintaining a spanning tree across all nodes. In such a setting, it may be more desirable to use an approximate distributed averaging protocol with less overhead.

\subsection{Approximate Distributed Averaging with Gossip Protocols }
\label{subsec:gossip}

Gossip protocols are a family of simple message passing schemes that can be used for approximate distributed averaging. Here we focus our discussion on synchronous gossip protocols, but asynchronous versions exist and are equally usable within the algorithmic framework considered in this paper~\cite{randomizedGossip,PushSum,Xiao04}; see~\cite{saberReview,gossipReview} for surveys and see~\cite{TsianosAllertonInvited} for a discussion of practical issues related to implementing asynchronous gossip protocols.

In a gossip protocol, rather than aggregating and broadcasting along a spanning tree, nodes iteratively pass messages to their neighbours in the communication graph $G$. In one iteration of a synchronous gossip algorithm, node $i$ transmits a message to its neighbours containing its current value $y_i$, and it receives messages from each of its neighbours containing their values $y_j$. Then node $i$ updates its value to be a convex combination of its previous values and those received from its neighbours. Specifically, let $P \in [0,1]^{n \times n}$ denote a $n \times n$ matrix of update weights. We require that $P$ respects the topology of the network in the sense that $P_{i,j} > 0$ if and only if either $i = j$ or nodes $i$ and $j$ are neighbours in $G$ (i.e., $(i,j) \in E$). After receiving messages from all of its neighbours, node $i$ updates its value to
\[
y_i \leftarrow \sum_{j=1}^n P_{i,j} y_j \;.
\]
Note that since $P_{i,j} = 0$ if $(i,j) \notin E$ and $i \ne j$, the only non-zero terms in the summation are those corresponding to neighbours of $i$ or to $i$ itself, and so node $i$ receives all the information it needs to compute the update via messages from its immediate neighbours. Let $y_i^{(0)}$ denote the initial vector at node $i$, and let
\[
y_i^{(k)} = \Gossip^k(y_i^{(0)}, i)
\]
denote the vector output at node $i$ after $k$ gossip iterations. Then
\[
y_i^{(k)} = \sum_{j=1}^n [P^k]_{i,j} y_j^{(0)} \;.
\]

In addition to imposing that the sparsity structure of $P$ matches the communication topology $G$, we assume that:
\begin{ass} \label{ass:gossipWeights}
The matrix $P$ is doubly stochastic; i.e., $\sum_{j=1}^n P_{i,j} = 1$ and $\sum_{i=1}^n P_{i,j} = 1$.
\end{ass}
This assumption, combined with the assumption that $G$ is connected (Assumption~\ref{ass:graphConnected}) implies, via the Perron-Frobenius Theorem~\cite{SenetaBook}, that $\lim_{k \rightarrow \infty} P^k = \frac{1}{n} \mathbf{1}_n \mathbf{1}_n^T$. Clearly this is both necessary and sufficient to ensure that $y_i^{(k)} \rightarrow \ybar$ at every node $i$. Moreover, there are well-known bounds on the rate of convergence of these iterations. Since each iteration of gossip can be viewed as taking one discrete-time step of a diffusion over the network, it it not surprising that the rate of convergence depends on the communication topology through the second largest eigenvalue $\lambda_2(P)$ of the matrix $P$.

\begin{lem} \label{lem:gossipRates}
Let Assumptions~\ref{ass:graphConnected} and~\ref{ass:gossipWeights} hold, let $\delta > 0$ be a given scalar, and let $y_i^{(k)} = \Gossip^k(y_i^{(0)}, i)$ denote the output at node $i$ after running $k$ iterations of gossip for distributed averaging. Then it holds that
\[
\norm{ y_i^{(k)} - \ybar } \le \delta
\]
for all $i \in [n]$ if the number of gossip iterations $k$ satisfies
\begin{equation}
k \ge \frac{\log\left( \frac{1}{\delta} \cdot 2 \sqrt{n} \cdot \max_j \norm{ y_j^{(0)} - \overline{y} }\right)}{1 - \lambda_2(P)} \;. \label{eqn:gossipIterations}
\end{equation}
\end{lem}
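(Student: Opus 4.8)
The plan is to reduce the per-node averaging error to the action of the single operator $P^k - J$, where $J \defeq \frac{1}{n}\mathbf{1}_n\mathbf{1}_n^T$, and then to control its spectral norm through $\lambda_2(P)$. The first step is the algebraic identity $P^k - J = (P-J)^k$. By Assumption~\ref{ass:gossipWeights}, double stochasticity gives both $P\mathbf{1}_n = \mathbf{1}_n$ and $\mathbf{1}_n^T P = \mathbf{1}_n^T$, from which $PJ = JP = J = J^2$; a short induction then collapses every cross term in the expansion of $(P-J)^k$, leaving exactly $P^k - J$.

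Next I would write the error at node $i$ as $y_i^{(k)} - \ybar = \sum_{j=1}^n [P^k - J]_{i,j}\,(y_j^{(0)} - \ybar)$. Inserting $-\ybar$ inside the sum is legitimate because $\sum_j [P^k - J]_{i,j} = 0$: both $P^k$ and $J$ have unit row sums. The triangle inequality followed by Cauchy--Schwarz on the $i$-th row then gives $\norm{y_i^{(k)} - \ybar} \le \big(\sum_{j} \abs{[P^k-J]_{i,j}}\big)\,\max_j \norm{y_j^{(0)}-\ybar} \le \sqrt{n}\,\norm{P^k - J}_2\,\max_j \norm{y_j^{(0)}-\ybar}$, uniformly in $i$, since each row of $P^k - J$ has Euclidean norm at most $\norm{P^k-J}_2$.

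The crux is the spectral estimate $\norm{P^k - J}_2 = \norm{(P-J)^k}_2 \le \lambda_2(P)^k$. Here I would invoke the Perron--Frobenius structure already noted before the lemma: subtracting $J$ removes the eigenvalue $1$ attached to the all-ones eigenvector, so the spectrum of $P - J$ is precisely the non-Perron spectrum of $P$, whose magnitude is controlled by $\lambda_2(P)$. When $P$ is symmetric this is immediate from the spectral theorem, the only subtlety being that the operator norm equals $\max(\abs{\lambda_2(P)}, \abs{\lambda_n(P)})$; I expect this spectral-gap step to be the main technical obstacle, as the remainder is bookkeeping. The constant $2$ appearing in~\eqref{eqn:gossipIterations} is exactly the slack that absorbs any gap between $\lambda_2(P)$ and $\abs{\lambda_n(P)}$ (and, more generally, between the spectral radius and the operator norm of $(P-J)^k$).

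Finally, combining the two previous displays, it suffices to pick $k$ so that $\sqrt{n}\,\lambda_2(P)^k \max_j \norm{y_j^{(0)} - \ybar} \le \delta$. Taking logarithms and solving for $k$ gives $k \ge \log\!\big(\tfrac{1}{\delta}\sqrt{n}\max_j\norm{y_j^{(0)}-\ybar}\big)/\log\!\big(1/\lambda_2(P)\big)$. The stated condition~\eqref{eqn:gossipIterations} then follows from the elementary bound $\log(1/\lambda_2(P)) \ge 1-\lambda_2(P)$ for $\lambda_2(P)\in(0,1)$, which only enlarges the right-hand side when the denominator is replaced by $1-\lambda_2(P)$, together with the harmless enlargement of the numerator by the factor $2$.
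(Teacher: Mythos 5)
Your argument has the same skeleton as the paper's proof: write $y_i^{(k)} - \ybar = \sum_{j=1}^n \bigl([P^k]_{i,j} - \tfrac{1}{n}\bigr)\bigl(y_j^{(0)} - \ybar\bigr)$, pull out $\max_j \norm{y_j^{(0)} - \ybar}$, reduce everything to a norm bound on the $i$th row of $P^k - J$ with $J = \tfrac{1}{n}\mathbf{1}_n\mathbf{1}_n^T$, and convert the geometric decay $\lambda_2(P)^k$ into the stated iteration count via $\log(1/\lambda_2(P)) \ge 1 - \lambda_2(P)$. Where you differ is the key norm estimate: the paper simply invokes a stochastic-matrix mixing result (its Lemma~\ref{lem:eigenvalueBound}, cited from Horn and Johnson), which bounds the $\ell_1$ distance between the $i$th row of $P^k$ and $\tfrac{1}{n}\mathbf{1}_n^T$ by $\lambda_2(P)^k \sqrt{n}$, whereas you re-derive this estimate from first principles: the commutation identity $(P-J)^k = P^k - J$, Cauchy--Schwarz to pass from the $\ell_1$ row sum to the $\ell_2$ row norm (this is where your $\sqrt{n}$ enters), and the spectral theorem for symmetric $P$. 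That is a legitimate, self-contained alternative to the citation, and arguably more transparent about where the spectral gap actually acts.

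The one genuine error is your claim that the factor $2$ in \eqref{eqn:gossipIterations} is ``exactly the slack that absorbs any gap between $\lambda_2(P)$ and $\abs{\lambda_n(P)}$.'' It cannot do that. If $\abs{\lambda_n(P)} > \lambda_2(P)$, then for symmetric $P$ one has $\norm{(P-J)^k}_2 = \abs{\lambda_n(P)}^k$, a strictly slower exponential decay, and the number of iterations needed to reach accuracy $\delta$ scales like $\log(\cdot)/\bigl(1-\abs{\lambda_n(P)}\bigr)$; a constant factor inside a logarithm can never compensate for a mismatch of exponential rates. For instance, a symmetric doubly stochastic $P$ with an eigenvalue close to $-1$ but small $\lambda_2(P)$ requires arbitrarily many iterations, while the right-hand side of \eqref{eqn:gossipIterations} stays bounded. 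The paper does not escape this issue by cleverness but by hypothesis: Lemma~\ref{lem:eigenvalueBound} explicitly assumes all eigenvalues of $P$ are nonnegative, so that the second-largest eigenvalue and the second-largest eigenvalue modulus coincide; the factor $2$ there is mere slack and does no work. To make your proof (and, strictly speaking, the lemma itself) airtight, you should import the same hypothesis --- e.g., assume $P$ is symmetric with nonnegative spectrum, as for a lazy matrix $\tfrac{1}{2}(I+P)$ --- or else state the bound with $\max\bigl(\abs{\lambda_2(P)}, \abs{\lambda_n(P)}\bigr)$ in place of $\lambda_2(P)$. With that proviso made explicit, your argument is complete and matches the paper's conclusion.
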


The proof is given in Appendix~\ref{app:gossipRatesProof}. The latency of gossip-based distributed averaging is proportional to the number of gossip iterations required, and it is evident from Lemma~\ref{lem:gossipRates} that this depends on the size and structure of the network, both directly through $\sqrt{n}$ as well as through $\lambda_2(P)$. It is well-known that the inverse spectral gap, $1/ (1 - \lambda_2(P))$ is related to the structure of the network~\cite{SpectralGraphTheoryBook}. This relationship is especially important for understanding how the performance of gossip protocols (as well as the distributed optimization method introduced below) scale as a function of the network size and structure.

Of particular interest is the case when $G$ is taken from the family of constant-degree expander graphs~\cite{kRegExpanders}. Then $1 - \lambda_2(P) = \Theta(1)$ as $n \rightarrow \infty$, and in addition, every node has the same number of neighbours. For gossip-based distributed averaging run over an expander graph or a complete graph, the resulting latency to obtain accuracy $\delta$ is
\[
\mu = \order{ \log\left(\frac{1}{\delta} \sqrt{n} \max_j \norm{ y_j^{(0)} - \ybar } \right) } \;.
\]
In particular, if both $\delta$ and $\max_j \norm{ y_j^{(0)} - \ybar }$ are fixed constant as $n \rightarrow \infty$, then the latency scales as $\order{ \log(n) }$ which is exactly the same as for exact distributed averaging using $\AllReduce$.

In general, we will see that one can use any connected graph $G$ satisfying the assumptions mentioned above. The expanders mentioned above are especially interesting because of the scaling properties mentioned. For less well-connected graphs, such as rings or two-dimensional grids, the inverse spectral gap $\frac{1}{1 - \lambda_2(P)}$ typically grows as a polynomial in $n$, and so more gossip iterations (and hence, more communications and a higher latency) are required to guarantee an accuracy of $\delta$.

\section{Distributed Dual Averaging with Approximate Mini-Batch Gradients}
\label{sec:alg}

Next we present the general template for the distributed dual averaging algorithm using an approximate distributed averaging protocol to compute the mini-batch gradients. Node $i$ maintains a primal variable $w_i(t) \in \R^d$ and a dual variable $z_i(t) \in \R^d$ which are both initialized to $w_i(1) = z_i(1) = \mathbf{0}_d$. In round $t \ge 1$, after receiving the samples $x_i(t, 1), \dots, x_i(t, \frac{b}{n})$ and suffering the corresponding loss $\sum_{s=1}^{b/n} f(w_i(t), x_i(t,s)$, node $i$ computes the local average gradient
\begin{equation}
g_i(t) = \frac{n}{b} \sum_{s=1}^{\frac{b}{n}} \gradient f(w_i(t), x_i(t,s)) \;. \label{eqn:g_i}
\end{equation}
Then, the nodes update their dual variables by running an approximate distributed averaging protocol,
\begin{equation}
z_i(t+1) = \DistributedAveraging( z_i(t) + g_i(t), i) \;. \label{eqn:z_i}
\end{equation}
Finally, each node updates its primal variable via the proximal projection,
\begin{equation}
w_i(t+1) = \arg\min_{w \in \constraintset} \left\{ \langle w, z_i(t+1) \rangle + \beta(t+1) h(w) \right\} \;, \label{eqn:w_i}
\end{equation}
where $\{ \beta(t) \}_{t=2}^\infty$ is a positive non-decreasing sequence of algorithm parameters and $h(w)$ is a $1$-strongly convex proximal function; i.e., $h(0) = 0$, and for all $\theta \in [0,1]$ and all $w, w' \in \constraintset$,
\[
h(\theta w + (1 - \theta) w') \le \theta \cdot h(w) + (1 - \theta) h(w') - \frac{1}{2} \theta (1 - \theta) \norm{w - w'}^2 \;.
\]
When working in Euclidean domains, as considered in this paper, the typical choice is to take $h(w) = \norm{ w }^2$; see~\cite{nesterovDualOpt,logRegRepeatedGames} for examples of other possible proximal functions that are more suitable in different domains.

For the framework just described, the same predictor $w_i(t)$ is used for all samples $x_i(t,s)$ processed at node $i$ in round $t$. Consequently, the expression for the regret simplifies from~\eqref{eqn:R_nlong} to
\begin{equation}
R_n(m) = \sum_{i=1}^n \sum_{t=1}^{\frac{m}{b + \mu}} \sum_{s=1}^{\frac{b + \mu}{n}} [f(w_i(t), x_i(t,s)) - f(w^*, x_i(t,s))] \;. \label{eqn:R_n}
\end{equation}

Note that, in the update~\eqref{eqn:z_i}, the argument passed to the distributed averaging protocol by node $i$ is $z_i(t) + g_i(t)$, the sum of the dual variable and mini-batch gradient at node $i$. If an approximate distributed averaging protocol is used, then the output at node $i$ is roughly
\begin{align}
z_i(t+1) &= \frac{1}{n} \sum_{i=1}^n \big( z_i(t) + g_i(t) \big) + \xi_i(t) \\
&= \overline{z}(t) + \frac{1}{b} \sum_{i=1}^n \sum_{s=1}^{\frac{b}{n}} \gradient f(w_i(t), x_i(t,s)) + \xi_i(t)\;, \label{eqn:intuition}
\end{align}
where $\xi_i(t)$ is an error term at node $i$ and $\overline{z}(t)$ is the network-average dual variable at round $t$. If an approximate distributed averaging protocol, such as gossip, is used in~\eqref{eqn:z_i} then an upper bound on the magnitude of the error $\xi_i(t)$ is guaranteed and can be made arbitrarily small, typically at the cost of increased communications. Thus, each node effectively updates the dual variable by adding the mini-batch gradient using the $b$ gradient samples obtained across the network.

Before proceeding to our main results, we remark that other distributed versions of the dual averaging algorithm using approximate distributed averaging have proposed in the literature~\cite{dualAveragingTAC}. There are two differences between the algorithm considered in~\cite{dualAveragingTAC} and the framework described above. The first is that the algorithm of~\cite{dualAveragingTAC} only allows for a single gossip iteration in each round, whereas the framework above allows for~\eqref{eqn:z_i} to involve multiple gossip iterations. The second, and more subtle point is that the algorithm of~\cite{dualAveragingTAC} only averages the dual variables $z_i(t)$, which would be equivalent to having node $i$ call
\[
z_i(t+1) = \DistributedAveraging(z_i(t), i) + g_i(t) \;,
\]
from which it is evident that the $z_i(t)$ no longer benefits from the same mini-batch gradient updates as in~\ref{eqn:intuition}; rather, averaging of the mini-batch of gradients $g_i(t)$ from round $t$ is delayed by a round and only appears in $z_i(t+2)$.

\section{Main Results}
\label{sec:results}

\subsection{Online Prediction}
\subsubsection{Regret Bound for General Approximate Distributed Averaging Protocol}

Our first main result demonstrates that the optimal $\order{ \sqrt{m} }$ regret bound can be achieved despite using approximate distributed averaging protocols in the distributed mini-batch framework.

\begin{thm} \label{thm:generalConvergence}
Let Assumptions~\ref{ass:constraintSet}--\ref{ass:graphConnected} hold. Let $z_i(t)$ and $w_i(t)$ be the values generated at node $i$ by the algorithm~\eqref{eqn:g_i}--\eqref{eqn:w_i} described above for $t \ge 1$, and with $\beta(t) = K + \sqrt{\frac{t}{b + \mu}}$. Suppose that the distributed averaging protocol used in~\eqref{eqn:z_i} has accuracy $\delta$ and latency $\mu$. Then the expected regret after processing $m$ samples in total across the network satisfies the bound
\begin{align*}
\E[R_n(m)] &\le (b+\mu) [ F(\wbar(1)) - F(w^*) + K h(w^*)] + \frac{3}{4} \delta^2 K^2 (b+\mu)^{5/2} \\
&\quad + \left[ 2 \sigma^2 \frac{b+\mu}{b} + 2 \delta K D \frac{b+\mu}{n} + 2 \delta L (b + \mu) \right] \sqrt{m} \;.
\end{align*}
\end{thm}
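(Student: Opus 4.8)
The plan is to reduce the distributed, approximate-averaging dynamics to a single-processor dual averaging analysis applied to the sequence of network-averaged dual variables $\zbar(t) = \frac{1}{n}\sum_{i=1}^n z_i(t)$, and then to pay for (a) the mismatch between each node's local $z_i(t)$ and $\zbar(t)$ caused by the averaging error $\xi_i(t)$ (with $\norm{\xi_i(t)} \le \delta$), and (b) the additional regret incurred by the $\mu$ samples processed during communication. The key object is the ``virtual'' centralized iterate $\wbar(t) = \arg\min_{w\in\constraintset}\{\dprod{w}{\zbar(t)} + \beta(t)h(w)\}$ driven by the exact accumulated average gradient. I would first establish that $\zbar(t+1) = \zbar(t) + \gbar(t)$ exactly, where $\gbar(t) = \frac{1}{b}\sum_{i,s}\gradient f(w_i(t),x_i(t,s))$, because the distributed averaging protocol preserves the network mean. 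This lets me treat $\{\wbar(t)\}$ as the output of standard dual averaging with step-size schedule $\beta(t)$, for which a regret bound of the dual-averaging type is available in terms of $\sigma^2$ and the proximal function.

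Next I would split the regret~\eqref{eqn:R_n} using convexity and the Lipschitz property (Assumption~\ref{ass:lossFunction}b) into a centralized term evaluated along $\wbar(t)$ plus perturbation terms $f(w_i(t),x_i(t,s)) - f(\wbar(t),x_i(t,s))$ bounded by $L\norm{w_i(t) - \wbar(t)}$. The consistency error $\norm{w_i(t)-\wbar(t)}$ is controlled by the nonexpansiveness of the proximal projection together with the $1$-strong convexity of $h$: the map $z\mapsto \arg\min_w\{\dprod{w}{z}+\beta h(w)\}$ is Lipschitz with constant $1/\beta$, so $\norm{w_i(t)-\wbar(t)} \le \frac{1}{\beta(t)}\norm{z_i(t)-\zbar(t)} \le \frac{\delta}{\beta(t)}$. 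Summing these errors over all $i$, $s$, and rounds $t$ gives the $\delta L(b+\mu)\sqrt{m}$ and $\delta K D \frac{b+\mu}{n}\sqrt{m}$ contributions, once $\beta(t)=K+\sqrt{t/(b+\mu)}$ is substituted and the arithmetic sums are bounded by integrals; the $\frac{3}{4}\delta^2 K^2(b+\mu)^{5/2}$ term should emerge from a second-order (smoothness) expansion of the loss in $\norm{w_i(t)-\wbar(t)}$, where one pays $\frac{K}{2}\norm{w_i(t)-\wbar(t)}^2 \le \frac{K\delta^2}{2\beta(t)^2}$ per sample and sums over the early rounds.

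For the centralized part, I would invoke the standard dual averaging regret guarantee for smooth stochastic losses: the accumulated gradient error is driven by the variance $\sigma^2$ of the mini-batched gradients, and since each $g_i(t)$ averages $b/n$ samples and the network aggregates to $b$ samples per round, the effective variance is reduced, yielding the $2\sigma^2\frac{b+\mu}{b}\sqrt{m}$ term after rescaling the per-round regret by the $b+\mu$ samples actually processed per round and summing over the $m/(b+\mu)$ rounds. The leading $(b+\mu)[F(\wbar(1))-F(w^*)+Kh(w^*)]$ term captures the initialization cost and the $Kh(w^*)$ penalty from the strong-convexity regularization, charged once per the $m/(b+\mu)$ rounds but scaled back up by $(b+\mu)$.

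The main obstacle I anticipate is bookkeeping the interaction between the smoothness constant $K$ and the step-size $\beta(t)=K+\sqrt{t/(b+\mu)}$: one must argue that choosing $\beta(t)$ to dominate $K$ makes the proximal objective effectively $\beta(t)$-strongly convex even after the smooth loss is linearized, so that the telescoping dual-averaging bound goes through without the smoothness destabilizing the projection step. Carefully propagating the factor $(b+\mu)$ through the change of time-scale (rounds versus samples) so that the $\sqrt{m}$ dependence comes out cleanly, while isolating the $(b+\mu)^{5/2}$ lower-order term, is the delicate part; the rest is convexity, Cauchy--Schwarz, and bounding sums by integrals.
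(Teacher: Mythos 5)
Your skeleton matches the paper's: define the virtual centralized iterate $\wbar(t)$ as the proximal projection of the network-average dual variable $\zbar(t)$, use the recursion $\zbar(t+1)=\zbar(t)+\gbar(t)$, split the regret into a centralized term along $\{\wbar(t)\}$ plus per-node deviations, and control $\norm{w_i(t)-\wbar(t)}\le \frac{1}{\beta(t)}\norm{z_i(t)-\zbar(t)}\le \frac{\delta}{\beta(t)}$ via the $1/\beta$-Lipschitz property of the projection. That part is exactly the paper's argument and yields the $2\delta L(b+\mu)\sqrt{m}$ term.

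However, there is a genuine gap in how you handle the centralized part. You propose to ``treat $\{\wbar(t)\}$ as the output of standard dual averaging'' and invoke the standard regret guarantee driven only by the variance $\sigma^2$. This is invalid: the increments $\gbar(t)$ that drive $\zbar(t)$ are gradients evaluated at the \emph{scattered} points $w_i(t)$, not at $\wbar(t)$, so $\{\wbar(t)\}$ is not a dual averaging sequence for any well-defined stochastic oracle queried at $\wbar(t)$. The paper flags precisely this and handles it by decomposing the gradient error into $q(t)=\ghat(t)-\gradient F(\wbar(t))$ (zero-mean stochastic noise, giving the $\sigma^2$ term) and $r(t)=\gbar(t)-\ghat(t)$ (evaluation-point mismatch), where $\ghat(t)$ uses all gradients evaluated at $\wbar(t)$. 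The mismatch term is \emph{not} zero-mean with respect to the filtration; it is bounded via gradient-Lipschitz continuity, Cauchy--Schwarz, and the diameter $D$ as $\E[\langle r(t),w^*-\wbar(t)\rangle]\le \frac{KD}{n\beta(t)}\sum_i \E[\norm{z_i(t)-\zbar(t)}]$ and $\E[\norm{r(t)}^2]\le \frac{K^2\delta^2}{\beta(t)^2}$, and these two bounds are what produce the $2\delta KD\frac{b+\mu}{n}\sqrt{m}$ and $\frac{3}{4}\delta^2K^2(b+\mu)^{5/2}$ terms after the smoothness-based telescoping. Your alternative attributions cannot recover these: the consistency error $L\norm{w_i(t)-\wbar(t)}$ carries the constant $L$, not $KD$, and a second-order expansion of the loss around $\wbar(t)$ pays $\frac{K}{2}\norm{w_i(t)-\wbar(t)}^2\le\frac{K\delta^2}{2\beta(t)^2}$ per term, which is $K\delta^2$, not $K^2\delta^2$ --- the dimensions of the constants do not match the claimed bound. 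Moreover, your regret decomposition already spends the $F(w_i(t))-F(\wbar(t))$ difference on a first-order Lipschitz bound, so there is no remaining place in your argument for such a second-order term to arise. Without introducing $r(t)$ and its two bounds inside the dual-averaging analysis of $\wbar(t)$, the proof cannot be completed as proposed.
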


The proof is given in Section~\ref{sec:generalConvergenceProof}. The result above bounds the expected regret as a function of the number of samples $m$ processed by the network, the mini-batch size $b$, the accuracy $\delta$ and latency $\mu$ of the distributed averaging protocol, and other problem-dependent constants. If the accuracy of the distributed averaging protocol is sufficiently small relative to its latency, then we obtain the following simplified bound.

\begin{cor} \label{cor:accuracy-delay}
If $\delta \le \frac{1}{b + \mu}$ then $\E[R_n(m)] = \order{ b + \sqrt{m} }$. Moreover, if $b = m^{\rho}$ for a constant $\rho \in (0, 1/2)$, then $\E[R_n(m)] = \order{ \sqrt{m} }$.
\end{cor}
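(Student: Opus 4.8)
The plan is to prove the corollary by a direct substitution of the hypothesis $\delta \le \frac{1}{b+\mu}$ into the regret bound of Theorem~\ref{thm:generalConvergence}, and then to track how each of the three groups of terms simplifies. Throughout, I would treat the problem-dependent quantities $K$, $L$, $D$, $\sigma^2$, the network size $n$, the latency $\mu$, and the initial gap $F(\wbar(1)) - F(w^*) + K h(w^*)$ as constants that do not grow with $m$, so that the asymptotics are taken in $m$ (with $b$ possibly a function of $m$). It is also convenient to work in the natural regime $\mu = \order{b}$, so that $b + \mu = \order{b}$ and the ratio $\frac{b+\mu}{b} = \order{1}$; this is precisely the regime relevant for the nearly-linear speedup, where the communication latency per round does not dominate the batch size.

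First I would bound the two terms on the first line of the Theorem~\ref{thm:generalConvergence} bound. The term $(b+\mu)[F(\wbar(1)) - F(w^*) + K h(w^*)]$ is a constant multiple of $(b+\mu) = \order{b}$. For the second term, substituting $\delta^2 \le (b+\mu)^{-2}$ gives $\frac{3}{4}\delta^2 K^2 (b+\mu)^{5/2} \le \frac{3}{4}K^2 (b+\mu)^{1/2} = \order{\sqrt{b}}$, which is also $\order{b}$ since $b \ge n \ge 1$. Hence the entire first line contributes $\order{b}$.

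Next I would handle the coefficient of $\sqrt{m}$ on the second line, bounding each of its three summands using $\delta(b+\mu) \le 1$. The first summand $2\sigma^2 \frac{b+\mu}{b}$ is $\order{1}$ because $\frac{b+\mu}{b} = \order{1}$; the second summand satisfies $2\delta K D \frac{b+\mu}{n} \le \frac{2KD}{n} = \order{1}$; and the third satisfies $2\delta L (b+\mu) \le 2L = \order{1}$. Thus the coefficient of $\sqrt{m}$ is $\order{1}$, so the second line contributes $\order{\sqrt{m}}$. Combining the two lines yields $\E[R_n(m)] = \order{b + \sqrt{m}}$, proving the first claim. For the ``moreover'' statement I would then substitute $b = m^\rho$ with $\rho \in (0, 1/2)$: since $\rho < 1/2$ we have $b = m^\rho = \order{\sqrt{m}}$, whence $b + \sqrt{m} = \order{\sqrt{m}}$ and the bound collapses to $\E[R_n(m)] = \order{\sqrt{m}}$.

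The only genuinely delicate point in the whole argument is the treatment of the latency: the simplification $\frac{b+\mu}{b} = \order{1}$ and the absorption of $(b+\mu)$ into $\order{b}$ both rely on $\mu$ not growing faster than $b$. I would make this regime explicit rather than leave it implicit, noting that for the gossip protocol of Lemma~\ref{lem:gossipRates} run over a well-connected graph one has $\mu = \order{\log\left(\frac{1}{\delta}\sqrt{n}\right)} = \order{\log(b+\mu) + \log n}$, which is logarithmic in $b$ and hence certainly $\order{b}$ whenever $b = m^\rho$ with $\rho > 0$. Thus the assumption $\mu = \order{b}$ holds automatically in exactly the setting of the corollary, and no separate hypothesis is needed.
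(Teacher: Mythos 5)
Your proof is correct and takes essentially the same route the paper intends: the corollary is stated without a separate proof precisely because it follows by direct substitution of $\delta \le \frac{1}{b+\mu}$ into the bound of Theorem~\ref{thm:generalConvergence}, which is exactly what you carry out, term by term. Your explicit flagging of the regime $\mu = \order{b}$ --- needed so that $(b+\mu)$ and $\frac{b+\mu}{b}$ can be absorbed into $\order{b}$ and $\order{1}$ respectively --- makes precise an assumption the paper leaves implicit (there $\mu$ is effectively treated as constant or logarithmic in $b$, as in the gossip setting of Theorem~\ref{thm:gossipRegretBound}), and your justification that this holds automatically when gossip is used is accurate.
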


From the corollary, we see that to obtain the optimal regret scaling we must use a distributed averaging protocol for which the accuracy and latency satisfy the relationship $\delta \le \frac{1}{b + \mu}$. Then, the optimal rate of $\order{ \sqrt{m} }$ follows by scaling $b$ appropriately with respect to the total number of data samples $m$ are to be processed.

\subsubsection{Approximate Mini-Batch Gradients via Gossip-based Distributed Averaging}

Next we investigate conditions under which gossip-based distributed averaging achieves the optimal rates in terms of expected regret. Recall that the gossip protocol for distributed averaging described in Section~\ref{subsec:gossip} is guaranteed to provide $\delta$-accurate output at each node, in the sense that $\norm{ y_j^{(k)} - \ybar } \le \delta$ for all nodes $j \in [n]$, if the number of iterations $k$ satisfies
\[
k \ge \frac{\log\left(\frac{1}{\delta} \cdot 2 \sqrt{n} \cdot \max_j \norm{y_j^{(0)} - \ybar} \right)}{1 - \lambda_2(P)} \;,
\]
where $y_j^{(0)} \in \R^d$ is the initial vector at node $j$, and $\ybar$ is the average of the initial vectors. 

The latency of gossip is proportional to the number of iterations executed; hence, let us suppose that there exists a positive integer $\gamma$ such that $\mu = \gamma k$; i.e., $\gamma$ is the number of samples arriving to the system during a single gossip iteration.

From Corollary~\ref{cor:accuracy-delay} above, we know that the optimal regret bound is obtained if the accuracy and delay satisfy the relationship $\delta \le \frac{1}{b + \mu} = \frac{1}{b + \gamma k}$. Intuitively we expect that it should be feasible to achieve such a bound using gossip iterations on any graph since the accuracy $\delta$ decays exponentially in $k$ (see Lemma~\ref{lem:gossipRates}) whereas the delay $\mu$ is linear in $k$. This intuition is made precise in the following theorem.

\begin{thm} \label{thm:gossipRegretBound}
Let Assumptions~\ref{ass:constraintSet}--\ref{ass:gossipWeights} hold. Let $z_i(t)$ and $w_i(t)$ denote the sequence of values generated at node $i$ by the algorithm~\eqref{eqn:g_i}--\eqref{eqn:w_i} described above for $t \ge 1$. Suppose that the distributed averaging operation in step~\eqref{eqn:z_i} is implemented by running the gossip protocol described in Section~\ref{subsec:gossip} for $k > 0$ iterations; i.e., for step~\eqref{eqn:z_i} take
\[
z_i(t+1) = \Gossip^k(z_i(t) + g_i(t), i) \;.
\]
Suppose, in addition, that there exists a constant $\gamma > 0$ independent of $k$ such that the latency of the gossip protocol is $\mu = \gamma k$, and assume that $\gamma < b$. Take the algorithm parameters to be $\beta(t) = K + \sqrt{\frac{t}{b + \gamma k}}$. If the number of gossip iterations $k$ executed at each round of the algorithm satisfies
\begin{equation}
k \ge \left\lceil \left( \frac{\log(4 L b \sqrt{n} ) + \log(\frac{1}{1 - \lambda_2(P)})}{1 - \lambda_2(P)} + \frac{1}{2 L b} + 1 \right) \frac{1}{1 - \frac{\gamma}{b}} \right\rceil \;, \label{eqn:kthm}
\end{equation}
then, for all $j \in [n]$ and all $t \ge 0$, the values $z_j(t+1)$ satisfy
\[
\norm{ z_j(t+1) - \zbar(t+1) } \le \frac{1}{b + \mu} \;,
\]
where $\zbar(t+1) = \frac{1}{n} \sum_{j=1}^n z_j(t+1)$. Thus, the expected regret is bounded as
\[
\E[R_n(m)] = \order{ b + \sqrt{m} } \;.
\]
Moreover, if the mini-batch size is chosen so that $b = m^\rho$ for some $\rho \in (0, 1/2)$ then $\E[R_n(m)] = \order{ \sqrt{m} }$.
\end{thm}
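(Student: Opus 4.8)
The plan is to decouple the two assertions of the theorem. The entire technical content is the per-round synchronization bound $\norm{z_j(t+1) - \zbar(t+1)} \le \frac{1}{b+\mu}$; once this holds for every round, the $k$-iteration gossip scheme realizes a $(\delta,\mu)$-distributed averaging protocol with accuracy exactly $\delta = \frac{1}{b+\mu}$ and latency $\mu = \gamma k$, so the hypothesis $\delta \le \frac{1}{b+\mu}$ of Corollary~\ref{cor:accuracy-delay} is met and the conclusions $\E[R_n(m)] = \order{b + \sqrt m}$ and (with $b = m^\rho$, $\rho \in (0,1/2)$) $\E[R_n(m)] = \order{\sqrt m}$ follow immediately. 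I would therefore devote the proof to establishing the synchronization bound by induction on $t$.

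For the induction, the key structural observation is that one round of~\eqref{eqn:z_i} applies $\Gossip^k$ to the inputs $y_j^{(0)} = z_j(t) + g_j(t)$, and since $P$ is doubly stochastic (Assumption~\ref{ass:gossipWeights}) gossip preserves the average, so $\zbar(t+1) = \ybar = \zbar(t) + \gbar(t)$. Hence $\norm{z_j(t+1) - \zbar(t+1)} = \norm{y_j^{(k)} - \ybar}$ is precisely the gossip accuracy after $k$ iterations, which Lemma~\ref{lem:gossipRates} controls through $\max_j\norm{y_j^{(0)} - \ybar}$. The base case is immediate because $z_j(1) = \mathbf{0}_d$ gives zero disagreement. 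For the inductive step I would bound the input spread using the inductive hypothesis together with the bounded-gradient property $\norm{\gradient f(w,x)} \le L$ (which yields $\norm{g_j(t)} \le L$, hence $\norm{g_j(t) - \gbar(t)} \le 2L$):
\[
\max_j \norm{y_j^{(0)} - \ybar} \le \max_j \norm{z_j(t) - \zbar(t)} + \max_j \norm{g_j(t) - \gbar(t)} \le \frac{1}{b+\mu} + 2L.
\]
Substituting $\delta = \frac{1}{b+\mu}$ and this spread into~\eqref{eqn:gossipIterations}, and simplifying $(b+\mu)\big(2L + \tfrac{1}{b+\mu}\big) = 2L(b+\mu)+1$, reduces the inductive step to verifying that the chosen $k$ guarantees
\[
k\,(1 - \lambda_2(P)) \ge \log\!\Big(2\sqrt n\,\big(2L(b + \gamma k) + 1\big)\Big).
\]

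The main obstacle is that this last inequality is \emph{implicit} in $k$: the latency $\mu = \gamma k$ places $k$ inside the logarithm, so I cannot simply read $k$ off from Lemma~\ref{lem:gossipRates}. To discharge it I would factor the argument of the log as $4Lb\sqrt n\big(1 + \tfrac{\gamma k}{b} + \tfrac{1}{2Lb}\big)$ and split the logarithm into the $k$-free term $\log(4Lb\sqrt n)$ and the residual $\log\big(1 + \tfrac{\gamma k}{b} + \tfrac{1}{2Lb}\big)$. The latter grows only logarithmically in $k$; since the resulting $k$ scales like $\tfrac{1}{1-\lambda_2(P)}$ up to logarithmic factors, $\log(\gamma k/b)$ contributes a term of order $\log\big(\tfrac{1}{1-\lambda_2(P)}\big)$ — exactly the extra summand appearing in~\eqref{eqn:kthm} — while the overall factor $\tfrac{1}{1-\gamma/b}$ (well defined since $\gamma < b$) supplies the slack needed to absorb the self-referential $\tfrac{\gamma k}{b}$ term. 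Verifying that the explicit choice~\eqref{eqn:kthm} satisfies the implicit inequality is the crux; I expect it to follow by substituting~\eqref{eqn:kthm}, applying $\log(1+x)\le x$ to the small corrections, and bounding $\log(\gamma k/b)$ via a standard estimate for solutions of $x \ge c\log(dx)$. With the induction complete, the synchronization bound holds for all $t \ge 0$, and the regret conclusions follow from Corollary~\ref{cor:accuracy-delay} as described.
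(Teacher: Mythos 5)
Your skeleton coincides step-for-step with the paper's own proof: the reduction to Corollary~\ref{cor:accuracy-delay}, the induction on $t$ with base case $z_j(1) = \mathbf{0}_d$, the use of average-preservation of doubly-stochastic gossip so that $\zbar(t+1) = \zbar(t) + \gbar(t)$, the spread bound $\max_j \norm{ y_j^{(0)} - \ybar } \le \frac{1}{b+\gamma k} + 2L$, and the resulting implicit inequality $k (1-\lambda_2(P)) \ge \log\left( 2\sqrt{n}\,(1 + 2Lb + 2L\gamma k) \right)$ all appear verbatim in Section~\ref{sec:gossipRegretBoundProof}. However, the one step you defer --- verifying that the explicit choice~\eqref{eqn:kthm} satisfies this implicit, self-referential inequality --- is the entire technical content of the theorem, and the tools you name would not deliver it. Applying $\log(1+x) \le x$ to the self-referential term (i.e., linearizing the logarithm at the point where its argument equals $1$) turns the requirement into $k\left(1 - \lambda_2(P) - \frac{\gamma}{b}\right) \ge \log(4Lb\sqrt{n}) + \frac{1}{2Lb}$, which is vacuous unless $1-\lambda_2(P) > \gamma/b$; that is not among the hypotheses (only $\gamma < b$ is assumed), and it genuinely fails on poorly connected topologies, where $1-\lambda_2(P)$ is polynomially small in $n$ while $\gamma/b$ may be a fixed constant. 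Likewise, a generic estimate for solutions of $x \ge c\log(dx)$ yields a sufficient threshold with larger constants and a different structure than~\eqref{eqn:kthm}; since the theorem asserts that the \emph{specific} value~\eqref{eqn:kthm} suffices, showing that some larger $k$ works does not prove the stated result.

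The missing idea is \emph{where} to linearize. The paper sets $\phi(x) = \frac{\log\left( 2\sqrt{n}(1 + 2Lb + 2L\gamma x) \right)}{1-\lambda_2(P)}$, notes that $\phi$ is concave so $\phi(x) \le \phi(\xhat) + \phi'(\xhat)(x-\xhat)$ for any $\xhat$, and chooses $\xhat$ so that the argument of the logarithm equals $\frac{2Lb}{1-\lambda_2(P)}$. With this choice the tangent slope is exactly $\phi'(\xhat) = \frac{\gamma}{b} < 1$, its crossing with the diagonal occurs at
\[
\left( \frac{\log(4Lb\sqrt{n}) + \log\frac{1}{1-\lambda_2(P)}}{1-\lambda_2(P)} - \frac{1}{1-\lambda_2(P)} + \frac{1}{2Lb} + 1 \right) \frac{1}{1 - \frac{\gamma}{b}} \;,
\]
and dropping the negative term gives precisely~\eqref{eqn:kthm}; any $k$ at or beyond this crossing satisfies $k \ge \phi(k)$, closing the induction. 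Equivalently, in your normalization with $u = 1 + \frac{1}{2Lb} + \frac{\gamma k}{b}$: use the tangent bound $\log u \le \log\frac{1}{1-\lambda_2(P)} + (1-\lambda_2(P))\, u - 1$ (linearization at $u_0 = \frac{1}{1-\lambda_2(P)}$ rather than $u_0 = 1$); the term $(1-\lambda_2(P)) \cdot \frac{\gamma k}{b}$ then recombines with the left-hand side to produce the factor $1 - \frac{\gamma}{b}$, and both the $\log\frac{1}{1-\lambda_2(P)}$ summand and the $\frac{1}{1-\gamma/b}$ factor of~\eqref{eqn:kthm} emerge simultaneously. Your intuition about where those two quantities come from is correct, but without this specific choice of linearization point the inductive step cannot be discharged, so the proposal as written has a genuine gap at its crux.
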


The proof is given in Section~\ref{sec:gossipRegretBoundProof}. The assumption that $\gamma < b$ simply states that the number of samples $\gamma$ arriving to the network per gossip iteration is less than the total size of the mini-batch. If this is not the case, then the number of samples being lost during the communication phase of the network is larger than the number of samples used in the algorithm updates~\eqref{eqn:g_i}--\eqref{eqn:w_i}, which is not very sensible. For cases of practical interest, one expects that the total latency, $\mu$, and hence $\gamma$, will be much smaller than $b$ (e.g., $\gamma$ remains constant as $m$ grows, whereas $b$ grows as a function of $m$)~\cite{DekelMiniBatches}.

Theorem~\ref{thm:gossipRegretBound} provides a condition on the number of gossip iterations $k$ executed in each round in order to guarantee that the optimal expected regret bound is achieved. From the perspective of scaling laws (as the network size $n$ grows), the three parameters which may depend on $n$ are the inverse spectral gap $1 / (1 - \lambda_2(P))$, the proportionality constant $\gamma$, and the mini-batch size $b$. The way in which the inverse spectral gap depends on $n$ is related to structural properties of the network topology, as discussed in Section~\ref{subsec:gossip} above. 

The proportionality constant $\gamma$ is the number of samples being processed across the network during each gossip iteration (i.e., the latency of each iteration), and it is related to the amount of time it takes to communicate for each gossip iteration. This depends on both the type of communication mechanism available and possibly the structure of the network topology. For example, if nodes communicate using point-to-point messages, then in each iteration, every node must send and receive a message from each of its neighbours, and thus $\gamma$ is proportional to the maximum degree of the network. Hence, $\gamma = \order{ n }$; in the most extreme case, each node sends a message to every other node in each iteration of gossip, and so $\gamma \propto n$. In more practical cases we expect that $\gamma = o(n)$; for example, if the maximum degree in the network grows logarithmically, as it does with high probability in many common random graph models (including the Erd\H{o}s-Renyi model and the random geometric graph model), then $\gamma = \order{ \log n}$. If the degree of each node remains fixed as $n$ grows, as in the constant-degree expander model~\cite{kRegExpanders}), then $\gamma = \order{ 1 }$. It also holds that $\gamma = o(n)$ if nodes communicate using local broadcasts (e.g., over a reliable wireless channel).

The mini-batch size $b$ is linearly proportional to $n$ since $b$ is the total number of samples processed across the network in each round (and so the number of samples per node is $b/n$). Since $\gamma$ only enters the expression~\eqref{eqn:kthm} via the ratio $\gamma/b$, and since $\gamma = \order{ n}$, we see that the term $1 / (1 - \frac{\gamma}{b}) = \order{1}$; when $\gamma = \Theta(n)$, then $1 / (1 - \frac{\gamma}{b})$ remains a constant less than one, and when $\gamma = o(n)$ then $1 / (1 - \frac{\gamma}{b})$ converges to $1$ as $n \rightarrow \infty$. Thus, although the particular communication mechanism affects the constants (through $\gamma$), it does not affect the overall rate at which $k$ grows as a function of $n$.

Fixing a particular class of network topology determines the relevant laws for $1/(1-\lambda_2(P))$, $\gamma$, and $b$, and the communication scaling law (how the amount of communication per round, $k$, grows as a function of $n$) can be determined. Again, of particular interest is the case where $G$ is a constant-degree expander graph. Then $\gamma = o(n)$ because the maximum degree of the graph is a constant independent of $n$, and $1 / (1 - \lambda_2(P)) = \Theta(1)$. Thus, for $G$ being a constant-degree expander graph, $k = \Theta(\log n)$ iterations of gossip suffices.

\subsection{Stochastic Optimization}

\subsubsection{General Optimization Error Bound}

Next, we turn to the problem of stochastic optimization. Recall that in distributed stochastic optimization, a network of $n$ nodes aims to cooperatively solve the problem
\begin{align*}
\text{minimize} &\quad F(w) \defeq \E_x[f(w,x)] \\
\text{subject to} &\quad w \in \constraintset \;.
\end{align*}
We assume a similar framework to that considered above, where $f(w,x)$ and $\constraintset$ satisfy Assumptions~\ref{ass:constraintSet}--\ref{ass:boundedGradientVariance}, and where nodes communicate over a network satisfying Assumption~\ref{ass:graphConnected}.

Unlike distributed online prediction, new samples do not arrive while the nodes are communicating, and so and no loss is suffered (in terms of the optimization error) due to latency. On the other hand, since the motivation of using a distributed system to solve the problem is to \emph{quickly} obtain an accurate solution, it is of interest to know how the time to reach an $\epsilon$-accurate solution scales with the number of nodes in the network.

We consider a method which operate in a similar manner to those described in Section~\ref{sec:alg} for distributed online prediction. Node $i$ maintains its own copy of the optimization variable $w_i(t)$, and the values at each node are updated in synchronous rounds. Within each round, nodes may process a mini-batch of samples locally, communicate with their neighbours, and then update their local variables. 

For stochastic optimization, we augment the dual averaging algorithm described in Section~\ref{sec:alg}. After performing the steps~\eqref{eqn:g_i}--\eqref{eqn:w_i}, node $i$ also computes the running average predictor,
\begin{equation}
\what_i(t+1) \defeq \frac{1}{t+1} \sum_{t'=1}^{t+1} w_i(t') \;, \label{eqn:what_i}
\end{equation}
In stochastic optimization, performance is measured in terms of the optimality gap. After the network has completed $T$ rounds, the optimality gap is given by
\[
\Delta_n(T) \defeq \max_{i \in [n]} \; \E[ F(\what_i(T)) ] - F(w^*) \;.
\]

It is well-known that there is a direct connection between the performance of online prediction methods and stochastic optimization in the single-processor setting~\cite{xiaoDA}, where Jensen's inequality gives that 
\begin{align*}
\Delta_1(T) &\le \frac{1}{T} \E[R_1(T)] \;.
\end{align*}
Thus, any online prediction algorithm with expected regret scaling as $\E[R_1(m)] = \order{ \sqrt{m} }$ can be used to obtain an optimization error of $\Delta_1(T) = \order{ \frac{1}{\sqrt{T}} }$. Equivalently, in order to guarantee that $\Delta_1(T) \le \epsilon$ using such an algorithm, we need to run at least $T \propto \frac{1}{\epsilon^2}$ rounds. 
In the distributed setting, we obtain a similar bound for the distributed dual averaging algorithm with approximate mini-batches.\footnote{Note that, unlike in the single-processor setting just discussed, we cannot simply apply Jensen's inequality to the regret bound from Theorem~\ref{thm:generalConvergence} to obtain this result. This is because the different processors produce sequences $\{\what_i(t)\}$ which are not exactly the same due to the use of an approximate distributed averaging protocol, and these errors must be properly accounted for to obtain a bound on the optimization error.}

\begin{thm} \label{thm:stochasticOptimizationBound}
Let Assumptions~\ref{ass:constraintSet}--\ref{ass:graphConnected} hold. Let $\{\what_i(t)\}_{t \ge 1}$ denote the values generated at node $i$ by repeating the iterations~\eqref{eqn:g_i}--\eqref{eqn:w_i} and~\eqref{eqn:what_i} with algorithm parameters $\beta(t) = K + \sqrt{\frac{t}{b}}$. For some positive integer $C \ge 1$, set the mini-batch size to $b = C n$. Assume that the approximate distributed averaging protocol used in~\eqref{eqn:z_i} guarantees accuracy $\delta \le \frac{1}{b}$ in each round. Then after $T$ rounds, we have
\begin{align}
\Delta_n(T) &\le \frac{1}{T} \left[F(0) - F(w^*) + K h(w^*) + \frac{3 K^2 }{4 \sqrt{C n}} \right]  + \frac{1}{\sqrt{T}} \left[ \frac{\sigma^2}{4 \sqrt{C n}} + \frac{2 K D}{n \sqrt{C n}} + \frac{2 L }{\sqrt{C n}} \right]  \;.
\end{align}
\end{thm}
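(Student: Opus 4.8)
The plan is to reduce the stochastic optimization bound to a regret-type bound, but applied carefully at the level of a single node rather than using the aggregate regret $R_n(m)$ directly. The key relation to exploit is the single-processor connection $\Delta_1(T) \le \frac{1}{T}\E[R_1(T)]$ via Jensen's inequality, but the footnote warns us precisely why this cannot be applied naively: the sequences $\{\what_i(t)\}$ differ across nodes because of the approximate averaging error $\xi_i(t)$. So the first step is to fix an arbitrary node $i$ and write, using convexity of $F$ and Jensen's inequality applied to the running average~\eqref{eqn:what_i},
\[
\E[F(\what_i(T))] - F(w^*) \le \frac{1}{T} \sum_{t=1}^{T} \big( \E[F(w_i(t))] - F(w^*) \big) \;.
\]
The right-hand side is now a per-node ``expected regret against $F$'' that I would control by re-deriving, or reusing, the dual-averaging analysis that underlies Theorem~\ref{thm:generalConvergence}. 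The crucial difference from the online prediction setting is that here there is no latency term: new samples do not arrive during communication, so $\mu = 0$ and the regret bound's $(b+\mu)$ factors collapse to $b = Cn$.

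The second step is to import the machinery of the Theorem~\ref{thm:generalConvergence} proof with $\mu = 0$ and $\delta \le \frac{1}{b}$. That analysis already tracks the network-average dual variable $\zbar(t)$, the per-node deviation $\norm{z_i(t) - \zbar(t)} \le \delta$, and the effect of the bounded-variance Assumption~\ref{ass:boundedGradientVariance} on the averaged mini-batch gradient. Setting $\mu = 0$ in the three error contributions of Theorem~\ref{thm:generalConvergence} and then dividing by $T$ is what should produce the three bracketed terms in the stated bound: the variance term $\sigma^2$, scaled by $\frac{1}{b} = \frac{1}{Cn}$, yields $\frac{\sigma^2}{4\sqrt{Cn}}$; the $\delta K D$ term, with $\delta \le \frac{1}{b}$ and the $\frac{1}{n}$ factor, yields $\frac{2KD}{n\sqrt{Cn}}$; and the $\delta L$ term yields $\frac{2L}{\sqrt{Cn}}$. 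The $\frac{1}{T}$ terms collect the initialization cost $F(0) - F(w^*) + Kh(w^*)$ together with the $\delta^2 K^2$ quadratic-error term, which under $\delta \le \frac{1}{b}$ contributes $\frac{3K^2}{4\sqrt{Cn}}$. I would carry out the bookkeeping by substituting $b = Cn$ and $\beta(t) = K + \sqrt{t/b}$ into the intermediate telescoping inequality from the regret proof rather than into its final statement, since the final statement of Theorem~\ref{thm:generalConvergence} has already summed over nodes.

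The main obstacle, and the place where genuine care is needed, is precisely the issue flagged in the footnote: bounding $\E[F(w_i(t))] - F(w^*)$ for an \emph{individual} node $i$, not the sum over nodes. In the regret proof the per-node synchronization errors $\xi_i(t)$ are handled in aggregate, but here I need a uniform-over-$i$ control because the optimality gap is defined with $\max_{i \in [n]}$. The resolution is that the accuracy guarantee $\norm{z_i(t+1) - \zbar(t+1)} \le \delta$ holds for \emph{every} node simultaneously (this is exactly what ``$(\delta,\mu)$-accuracy'' means), so the per-node deviation bound carries through uniformly; taking the maximum over $i$ therefore does not worsen the constants. The remaining technical work is to propagate this uniform deviation bound through the Lipschitz-continuity Assumption~\ref{ass:lossFunction}(b)--(c) to convert the dual-variable discrepancy $\norm{z_i - \zbar}$ into a primal-iterate discrepancy $\norm{w_i(t) - \wbar(t)}$ (using the strong convexity of $h$ and the $1$-Lipschitz property of the proximal projection in~\eqref{eqn:w_i}), and then into a function-value gap via the $L$-Lipschitz bound on $f$. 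Once that single uniform bound is established, summing over $t$, dividing by $T$, and taking the maximum over $i$ completes the argument.
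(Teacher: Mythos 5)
Your proposal is correct and follows essentially the same route as the paper's own proof: the paper likewise decomposes against the averaged reference sequence $\wbar(t)$ (via its running average $\wbarhat(t)$ and Jensen's inequality), invokes the intermediate averaged-sequence regret bound (Lemma~\ref{lem:averagedRegretBound}) with $\mu = 0$ and $a(t) = \sqrt{t/b}$, and controls the per-node deviation uniformly through the accuracy guarantee $\norm{z_i(t) - \zbar(t)} \le \delta \le \frac{1}{b}$ combined with Lemma~\ref{lem:proxProjectionLipschitz} and Lipschitz continuity of $F$. The only cosmetic difference is the order of operations (you apply Jensen at the node level first and then decompose per round, whereas the paper decomposes the running averages first), which yields the identical intermediate expression.
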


The proof is given in Section~\ref{sec:stochasticOptimizationBoundProof}. The result above shows that, for a fixed network size $n$, the optimality gap after $T$ rounds is of the order $\Delta_n(T) = \order{ \frac{1}{\sqrt{n T}} }$. Thus, for $\epsilon > 0$, in order to guarantee that $\Delta_n(T) \le \epsilon$ we need to perform at least $T \propto \frac{1}{n \epsilon^2}$ rounds. Similar to the serial setting, the number of rounds scales as $\frac{1}{\epsilon^2}$. However, importantly, we see that in the distributed setting, the number of rounds is reduced by a factor of $n$, and so we obtain a linear scale-up in terms of the network size. In order to translate this scaling in the number of rounds $T$ to a speedup in terms of the runtime, we also need to account for the time spent communicating in each round relative to the time spent computing. This is the topic of the next subsection.

\subsubsection{Latency and Scaling}

In distributed stochastic optimization we do not consider latency in the definition of the optimization error $\Delta_n(T)$ because, unlike the online prediction setting where inputs arrive in a streaming manner and predictions must be made in real-time,\footnote{Recall that the motivation for using a distributed system for online prediction was to enable handling higher rates of predictions.} in stochastic optimization no additional loss or error needs to be accrued while the network is communicating. The primary motivation for using a distributed system in the context of stochastic optimization is to achieve accurate results \emph{faster} than with a single node, by exploiting the ability to process inputs in parallel.

Theorem~\ref{thm:stochasticOptimizationBound} above shows that we achieve a linear speedup in the number of iterations required to guarantee an optimization error of no more than $\epsilon$. Specifically, $T \propto \frac{1}{n \epsilon^2}$. However, we should also account for the latency involved in communication in order to model the overall runtime to reach accuracy $\epsilon$. To this end, we adapt the approach of~\cite{TsianosNIPS2012} for modelling runtimes of distributed optimization methods.

Let us assume that time is normalized so that it takes a processor 1 unit of time to process a single sample $x$ (i.e., to evaluate~$\gradient f(w,x)$ once), and let $\tau > 0$ denote the number of time units it takes a node to transmit its dual variable to one neighbour.\footnote{We will disregard the time it takes to compute the proximal projection~\eqref{eqn:w_i}; relative to the time spent aggregating gradients in~\eqref{eqn:g_i} and the time spent communicating in~\eqref{eqn:z_i}, which both depend on the mini-batch size $b$ and network size $n$, the time to compute~\eqref{eqn:w_i} is $\order{1}$. Often, there is a closed form expression for the solution to~\eqref{eqn:w_i}.}  In the serial (single-processor) setting, the dual averaging algorithm must process $T \propto \frac{1}{\epsilon^2}$ samples in order to guarantee an optimality gap of the order $\Delta_1(T) = \order{ \epsilon }$, and so clearly the runtime also scales as $\order{ \frac{1}{\epsilon^2}}$ time units.

Suppose we run a gossip protocol for $k$ iterations to implement the approximate distributed averaging operation~\eqref{eqn:z_i} in every round. In each iteration, each node must send a copy of its dual variable to each of its neighbours, so the time spent communicating per round is $\tau \cdot k \cdot \deg(G)$, where $\deg(G)$ denotes the maximum node degree in $G$. Before communicating, the nodes each process $b/n$ samples, in parallel, to compute $g_i(t)$, so the total time per round is $\frac{b}{n} + \tau \cdot k \cdot \deg(G)$ time units. Combining this with the fact that the distributed algorithm needs to run for $T \propto \frac{1}{n \epsilon^2}$ rounds, we have that the overall runtime is proportional to
\[
\frac{1}{n \epsilon^2} \left(\frac{b}{n} + \tau \cdot k \cdot \deg(G)\right) = \frac{1}{n \epsilon^2} \left(C + \tau \cdot k \cdot \deg(G)\right) \quad \text{time units,}
\]
where $k$ is the number of gossip iterations required to guarantee $\delta \le \frac{1}{b}$ in every round. 

To determine the appropriate value of $k$ we can directly apply Lemma~\ref{lem:gossipRates} in this case to find that $\delta \le \frac{1}{b}$ if we run at least
\[
k \ge \frac{\log\left(b \cdot 2\sqrt{n} \cdot \max_j \norm{y_j^{(0)} - \ybar}\right)}{1 - \lambda_2(P)}
\]
gossip iterations, where $y_j^{(0)} = z_j(t) + g_j(t)$ is the initial value at each node when the gossip protocol is invoked to compute $z_j(t+1)$. Note that
\begin{align*}
\norm{z_j(t) + g_j(t) - (\zbar(t) + \gbar(t))} &\le \norm{z_j(t) - \zbar(t)} + \norm{g_j(t) - \gbar(t)} \\
&\le \norm{z_j(t) - \zbar(t)} + 2 L \;,
\end{align*}
where the second inequality follows after a second application of the triangle inequality along with recalling that Assumptions~\ref{ass:constraintSet} and~\ref{ass:lossFunction} imply that the gradients $g_j(t)$ are bounded (and hence $\gbar(t)$ is too). In addition, we claim that $\norm{z_j(t) - \zbar(t)} \le \frac{1}{b}$ for all $t$. Clearly this holds for the first round since $z_j(1) = \mathbf{0}_d$ at every node. Proceeding inductively, suppose that $\norm{z_j(t) - \zbar(t)} \le \frac{1}{b}$. Then choosing
\[
k \ge \frac{\log\left(b \cdot 2 \sqrt{n} \cdot (\frac{1}{b} + 2 L)\right)}{1 - \lambda_2(P)} \propto \frac{\log(n)}{1 - \lambda_2(P)}
\]
ensures that $\norm{z_j(t+1) - \zbar(t+1)} \le \frac{1}{b}$, and so it holds for all $t \ge 1$.

Returning to the runtime scaling, we have that overall runtime when gossip is used for approximate distributed averaging is proportional to
\[
\frac{1}{n \epsilon^2} \left(C +  \frac{\tau \log(n)}{1 - \lambda_2(P)} \cdot \deg(G)\right) \quad \text{time units.}
\]
When $G$ is taken from the family of constant-degree expander graphs then $\deg(G) = \Theta(1)$ and $\frac{1}{1 - \lambda_2(P)} = \Theta(1)$, and so the overall runtime scales as
\[
\frac{1}{n \epsilon^2} \left(C + \tau \log(n) \right) \quad \text{time units.}
\]
Thus, the runtime scales nearly linearly, as $\order{\frac{\log n}{n}}$, when communication is over an expander graph. Note that this is identical to the scaling achieved when exact distributed averaging is used, since aggregating along a spanning tree also takes $\order{ \log n }$ time units.

\section{Analysis}
\label{sec:analysis}

\subsection{Proof of Theorem~\ref{thm:generalConvergence}}
\label{sec:generalConvergenceProof}

Let
\[
X(t) \defeq \left\{ x_i(t', s) \colon 1 \le i  \le n, \, 1 \le t' \le t \text{ and } 1 \le s \le \frac{b+\mu}{n} \right\}
\] 
denote the set of all data processed across the network during the first $t$ rounds. Observe that the predictor $w_i(t)$ computed using~\eqref{eqn:w_i} is independent of $x_i(t,s)$ given $X(t-1)$. Recall that $\E[\cdot]$ denotes the expectation with respect to the entire data sequence $X(\frac{m}{b+\mu})$, i.e., with respect to the entire collection of $m$ samples processed by the network when computing $\E[R_n(m)]$. Conditioning on $X(t-1)$ gives
\begin{align*}
\E[f(w_i(t), x_i(t,s))] &= \E\big[ \E[ f(w_i(t), x_i(t,s)) | X(t-1) ] \big] \\
&= \E[F(w_i(t))].
\end{align*}
Also, clearly $\E[f(w^*, x_i(t, s))] = \E[F(w^*)]$. Therefore, for $t \le \frac{m}{b + \mu}$,
\begin{equation}
\E[R_n(m)] = \sum_{t=1}^{\frac{m}{b+\mu}} \sum_{i=1}^n \sum_{s=1}^{\frac{b+\mu}{n}} [\E[F(w_i(t)) - F(w^*)] \;, \label{eqn:expectedRegret}
\end{equation}
and we can focus on bounding the performance with respect to the expected loss function, $F(w)$.

Each predictor $w_i(t)$ is a proximal projection of the dual variable $z_i(t)$ as given in~\eqref{eqn:w_i}. In the sequel we will use the fact that this projection is Lipschitz continuous, as given by the following lemma. (See, e.g.,~\cite{dualAveragingTAC} for a proof.)

\begin{lem} \label{lem:proxProjectionLipschitz}
Let $h \colon \R^d \rightarrow \R$ be a strongly convex function and consider the proximal projection
\[
\proxprojection{ z, \beta } \defeq \arg \min_{w \in \constraintset} \left\{ \langle w, z \rangle + \beta h(w) \right\} \;.
\]
For any vectors $z, z' \in \R^d$ and for any positive scalar $\beta > 0$,
\[
\norm{ \proxprojection{ z, \beta } - \proxprojection{ z', \beta } } \le \frac{1}{\beta} \norm{ z - z' } \;.
\]
\end{lem}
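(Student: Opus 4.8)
The plan is to exploit the fact that each projected point is the minimizer of a $\beta$-strongly convex function over the convex set $\constraintset$, and that at such a minimizer one obtains a quadratic lower bound on the objective gap. Write $w = \proxprojection{z, \beta}$ and $w' = \proxprojection{z', \beta}$, and define the two objectives $\phi(u) \defeq \dprod{u}{z} + \beta h(u)$ and $\phi'(u) \defeq \dprod{u}{z'} + \beta h(u)$. Since $h$ is $1$-strongly convex (as imposed in the algorithm), both $\phi$ and $\phi'$ are $\beta$-strongly convex, and $w$ minimizes $\phi$ while $w'$ minimizes $\phi'$, each over $\constraintset$.

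First I would record the key inequality for strongly convex minimizers: if $\psi$ is $\beta$-strongly convex and $w^\star = \arg\min_{u \in \constraintset} \psi(u)$, then for every $u \in \constraintset$,
\[
\psi(u) \ge \psi(w^\star) + \frac{\beta}{2}\norm{u - w^\star}^2 \;.
\]
This follows by combining the first-order optimality condition at $w^\star$ (namely $\dprod{\nabla \psi(w^\star)}{u - w^\star} \ge 0$, or its subgradient analogue) with the strong-convexity lower bound on $\psi$. I would apply this once to $\phi$ at its minimizer $w$, evaluated at the feasible point $w'$, and once to $\phi'$ at its minimizer $w'$, evaluated at $w$, producing two inequalities of the above form.

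The decisive step is then to add these two inequalities. The reason for choosing the \emph{same} parameter $\beta$ in both projections is that the $\beta h(\cdot)$ contributions cancel exactly upon addition, leaving only the linear terms, so the sum collapses to
\[
\dprod{w' - w}{z - z'} \ge \beta \norm{w - w'}^2 \;.
\]
Applying the Cauchy--Schwarz inequality to the left-hand side gives $\dprod{w' - w}{z - z'} \le \norm{w' - w}\,\norm{z - z'}$, and cancelling one factor of $\norm{w - w'}$ (the degenerate case $w = w'$ being immediate) yields the claimed bound $\norm{w - w'} \le \frac{1}{\beta}\norm{z - z'}$.

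I expect the only genuine subtlety to be that $h$ need not be differentiable, so the first-order optimality condition should properly be phrased in terms of a subgradient of $\psi$ rather than a gradient. The quadratic lower bound at a minimizer, however, holds verbatim for any $\beta$-strongly convex function and is the cleanest way to sidestep differentiability entirely; once that inequality is in hand, the remainder is routine algebra and a single application of Cauchy--Schwarz.
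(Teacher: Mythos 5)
Your proof is correct, and it is essentially the standard argument: the paper itself does not prove this lemma (it defers to the cited reference \cite{dualAveragingTAC}), where the same reasoning — strong convexity of the two objectives, the quadratic-growth inequality at each minimizer, summing so the $\beta h(\cdot)$ terms cancel, then Cauchy--Schwarz — is used. Note also that your ``key inequality'' for strongly convex minimizers is exactly the paper's Lemma~\ref{lem:stronglyConvex} (quoted from~\cite{TsengAccelerated}), so your argument fits entirely within machinery the paper already invokes.
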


By assumption, the distributed averaging algorithm used in \eqref{eqn:z_i} produces an output $z_i(t+1)$ at node $i$ which satisfies $\|z_i(t+1) - \zbar(t+1) \| \le \delta$ for all $i$, where
\begin{equation}
\zbar(t+1) \defeq \frac{1}{n} \sum_{i=1}^n \left( z_i(t) + g_i(t) \right) \;. \label{eqn:zbar}
\end{equation}
If all nodes were to exactly compute $\zbar(t+1)$ and use this in the projection step~\eqref{eqn:w_i} then they would obtain the predictor
\begin{equation}
\wbar(t+1) \defeq \arg\min_{w \in \constraintset} \left\{ \langle w, \zbar(t+1) \rangle + \beta(t+1) h(w) \right\} \;. \label{eqn:wbar}
\end{equation}
To bound the expected regret across the network in the general case where an approximate distributed averaging protocol is used, we relate the expected regret to the regret that would be accumulated using the reference predictor sequence, $\{\wbar(t)\}_{t \ge 1}$, and then we bound the deviation between the performance of the predictors $\{w_i(t)\}_{t \ge 1}$ from that of the reference predictors $\{\wbar(t)\}_{t \ge 1}$.

Adding and subtracting $F(\wbar(t))$ in~\eqref{eqn:expectedRegret} gives
\begin{align*}
\E[R_n(m)] &= \E\left[ \sum_{t=1}^{\frac{m}{b+\mu}} \sum_{i=1}^n \sum_{s=1}^{\frac{b+\mu}{n}} [F(\wbar(t)) - F(w^*) + F(w_i(t)) - F(\wbar(t))] \right] \\
&= \E\left[ (b + \mu) \sum_{t=1}^{\frac{m}{b+\mu}} [F(\wbar(t)) - F(w^*)] + \frac{b+\mu}{n} \sum_{t=1}^{\frac{m}{b+\mu}} \sum_{i=1}^n [F(w_i(t)) - F(\wbar(t))] \right] \;.
\end{align*}
Since $f(w, x)$ is Lipschitz continuous in $w$ for all $x$ by Assumption~\ref{ass:lossFunction}, it follows that $F(w)$ is Lipschitz continuous with the same constant. Using this property along with Lemma~\ref{lem:proxProjectionLipschitz} and linearity of expectation gives
\begin{align}
\E[R_n(m)] &\le \E\left[ (b + \mu) \sum_{t=1}^{\frac{m}{b+\mu}} [F(\wbar(t)) - F(w^*)] + \frac{(b+\mu)L}{n} \sum_{t=1}^{\frac{m}{b+\mu}} \sum_{i=1}^n \norm{ w_i(t) - \wbar(t) } \right] \nonumber \\
&\le (b + \mu) \sum_{t=1}^{\frac{m}{b+\mu}} \E[F(\wbar(t)) - F(w^*)] + \frac{(b+\mu)L}{n} \sum_{t=1}^{\frac{m}{b+\mu}} \sum_{i=1}^n \frac{1}{\beta(t)} \E[ \norm{ z_i(t) - \zbar(t) } ] \nonumber \\
&\le (b + \mu) \sum_{t=1}^{\frac{m}{b+\mu}} \E[F(\wbar(t)) - F(w^*)] + (b+\mu) L \delta \sum_{t=1}^{\frac{m}{b+\mu}} \frac{1}{\beta(t)} \;, \label{eqn:intermediateRegret}
\end{align}
where the last inequality follows based on the assumed accuracy of the approximate distributed averaging protocol. The main work remaining to prove Theorem~\ref{thm:generalConvergence} involves bounding the first sum in~\eqref{eqn:intermediateRegret}. To this end we have:

\begin{lem} \label{lem:averagedRegretBound}
Let the assumptions of Theorem~\ref{thm:generalConvergence} hold, and let $a(t) = \sqrt{\frac{t}{b + \mu}}$. Then
\begin{align*}
\sum_{t=1}^{\frac{m}{b + \mu}} \E[F(\wbar(t)) - F(w^*)] &\le F(\wbar(1)) - F(w^*) + (K + a(\tfrac{m}{b + \mu})) h(w^*) + \frac{\sigma^2}{4b} \sum_{t=1}^{\frac{m}{b+\mu} - 1} \frac{1}{a(t)} \\
&\quad + \frac{\delta^2 K^2}{4} \sum_{t=1}^{\frac{m}{b+\mu} - 1} \frac{1}{a(t) (K + a(t))^2} + \frac{\delta K D}{n} \sum_{t=1}^{\frac{m}{b+\mu} - 1} \frac{1}{K + a(t)} \;.
\end{align*}
\end{lem}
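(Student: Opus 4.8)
The plan is to recognize that the reference sequence $\{\wbar(t)\}$ is precisely what serial dual averaging produces when driven by the averaged mini-batch gradient $\gbar(t) = \frac{1}{n}\sum_{i} g_i(t) = \frac{1}{b}\sum_{i,s}\gradient f(w_i(t), x_i(t,s))$. Indeed, since the averaging step preserves the network mean, the $z_i(t)$ average to $\zbar(t)$, so~\eqref{eqn:zbar} reduces to the exact recursion $\zbar(t+1) = \zbar(t) + \gbar(t)$ with $\wbar(t)$ the associated proximal projection~\eqref{eqn:wbar}. The target bound should therefore emerge from a serial \emph{smooth} dual-averaging regret bound for $F$, with $\gbar(t)$ treated as an \emph{inexact} gradient of $F$ at $\wbar(t)$. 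Accordingly, I would first decompose the oracle error as $\gbar(t) - \gradient F(\wbar(t)) = q(t) + \zeta(t)$, where $q(t) = \frac{1}{n}\sum_i[\gradient F(w_i(t)) - \gradient F(\wbar(t))]$ is a bias that is deterministic given the past data, and $\zeta(t) = \frac{1}{b}\sum_{i,s}[\gradient f(w_i(t), x_i(t,s)) - \gradient F(w_i(t))]$ is stochastic noise.

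Two estimates drive the whole argument. For the bias, Assumption~\ref{ass:lossFunction}(c) (which also makes $F$ itself $K$-smooth, via Jensen) gives $\norm{q(t)} \le \frac{K}{n}\sum_i \norm{w_i(t) - \wbar(t)}$, and then Lemma~\ref{lem:proxProjectionLipschitz} together with the $\delta$-accuracy of the averaging protocol yields $\norm{w_i(t) - \wbar(t)} \le \frac{1}{\beta(t)}\norm{z_i(t) - \zbar(t)} \le \frac{\delta}{\beta(t)}$, so $\norm{q(t)} \le \frac{K\delta}{\beta(t)}$. For the noise, $\wbar(t)$ is measurable with respect to the data through round $t-1$ and the round-$t$ samples are i.i.d.\ and independent of that history, so $\E[\zeta(t)\mid X(t-1)] = \mathbf{0}_d$ and, by independence across the $b$ samples and Assumption~\ref{ass:boundedGradientVariance}, $\E[\norm{\zeta(t)}^2 \mid X(t-1)] \le \sigma^2/b$. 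Since $q(t)$ is measurable given the past while $\zeta(t)$ is conditionally mean-zero, the cross term $\E\langle q(t), \zeta(t)\rangle$ vanishes, keeping the bias and variance contributions cleanly separated.

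The core of the proof is the serial smooth dual-averaging inequality, and I expect it to be the main obstacle. Starting from convexity, $F(\wbar(t)) - F(w^*) \le \langle \gradient F(\wbar(t)), \wbar(t) - w^*\rangle = \langle \gbar(t), \wbar(t) - w^*\rangle - \langle q(t) + \zeta(t), \wbar(t) - w^*\rangle$, I would sum over $t$ and control $\sum_t \langle \gbar(t), \wbar(t) - w^*\rangle$ through the dual-averaging potential/telescoping argument of Nesterov and Xiao. The essential maneuver is to invoke the $K$-smoothness of $F$ to bound the one-step progress $F(\wbar(t+1)) \le F(\wbar(t)) + \langle \gradient F(\wbar(t)), \wbar(t+1) - \wbar(t)\rangle + \frac{K}{2}\norm{\wbar(t+1) - \wbar(t)}^2$, and to absorb the quadratic smoothness penalty into the $1$-strong convexity supplied by $\beta(t) h$. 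This way one does \emph{not} pay the prohibitive $\norm{\gbar(t)}^2 \le L^2$ term, but instead pays only the squared oracle error $\norm{q(t) + \zeta(t)}^2$ weighted by the effective inverse curvature $\frac{1}{\beta(t) - K} = \frac{1}{a(t)}$ --- which is exactly why $\beta(t) = K + a(t)$ is chosen.

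Taking total expectations then assembles the stated bound: telescoping the function values produces the boundary term $F(\wbar(1)) - F(w^*)$; the comparator contributes $(K + a(\tfrac{m}{b+\mu}))\,h(w^*)$; and the quadratic error term splits, via the conditional estimates above, into the variance piece $\frac{\sigma^2}{4b}\sum_t \frac{1}{a(t)}$ and the bias-squared piece $\frac{\delta^2 K^2}{4}\sum_t \frac{1}{a(t)(K+a(t))^2}$. The surviving linear term $-\sum_t \E\langle q(t), \wbar(t) - w^*\rangle$, controlled through $\norm{q(t)} \le \frac{K\delta}{\beta(t)}$ and $\norm{\wbar(t) - w^*} \le D$, yields the bias-times-diameter piece. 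The delicate point in the bookkeeping is to extract the claimed $\frac{\delta K D}{n}$ scaling on this last term rather than the cruder $\delta K D$ that a termwise Cauchy--Schwarz bound produces; this requires exploiting the averaged structure $q(t) = \frac{1}{n}\sum_i[\cdots]$ (and the near-linearity of the proximal map, which relates $\wbar(t)$ to the node average $\frac{1}{n}\sum_i w_i(t)$) to save the factor of $n$. Matching the precise $\frac14$ constants on the quadratic terms similarly amounts to carrying the exact constant out of the smooth dual-averaging lemma, but this is routine once the telescoping inequality is established.
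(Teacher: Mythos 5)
Your outline follows essentially the same route as the paper's proof: the reference sequence $\{\wbar(t)\}$ is treated as inexact serial dual averaging driven by $\gbar(t)$, the oracle error $\gbar(t) - \gradient F(\wbar(t))$ is split into a bias and a noise piece, the $K$-smoothness penalty is absorbed into the strong convexity supplied by $\beta(t)h$ so that only squared errors weighted by $\frac{1}{4a(t)}$ survive, and the telescoping/comparator argument produces the $F(\wbar(1)) - F(w^*)$ and $(K + a(\tfrac{m}{b+\mu}))h(w^*)$ terms. The only structural difference is where you place the randomness: the paper's noise term is $q(t) = \ghat(t) - \gradient F(\wbar(t))$, with all stochastic gradients evaluated at the common point $\wbar(t)$, and its bias term is $r(t) = \gbar(t) - \ghat(t)$, a difference of stochastic gradients at $w_i(t)$ versus $\wbar(t)$; you evaluate the noise at the $w_i(t)$ and express the bias through $\gradient F$. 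Both decompositions yield the same estimates (conditional variance $\sigma^2/b$, bias norm $K\delta/\beta(t)$), and yours has the minor advantage that the bias--noise cross term vanishes exactly in expectation, whereas the paper never needs the cross term because it splits $\frac{a(t)}{2}$ into two quarters, one absorbing $\norm{q(t)}$ and one absorbing $\norm{r(t)}$.

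The one point where you stall --- extracting $\frac{\delta K D}{n}$ rather than $\delta K D$ on the linear bias term --- deserves a direct answer: that factor of $1/n$ cannot be obtained by this argument, and the paper does not legitimately obtain it either. The paper's Lemma~\ref{lem:errorProperties} gives $\E[\langle r(t), w^* - \wbar(t)\rangle] \le \frac{KD}{n\beta(t)}\sum_{i=1}^n \E[\norm{z_i(t) - \zbar(t)}]$; since the accuracy guarantee is per node, the sum is at most $n\delta$ and the $n$'s cancel, leaving $\frac{KD\delta}{\beta(t)}$ per round --- exactly the ``cruder'' bound you obtain. In the final display of Section~\ref{sec:averagedRegretBoundProof} the inner sum over $i$ is silently dropped (a single $\E[\norm{z_i(t) - \zbar(t)}]$ appears with a dangling index $i$), and that is where the spurious $1/n$ enters the lemma statement and propagates into Theorem~\ref{thm:generalConvergence}. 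Your proposed rescue via near-linearity of the proximal map would not work: although the deviations $z_i(t) - \zbar(t)$ sum exactly to zero, neither the projection nor $\gradient F$ is affine, so no cancellation is available under the stated assumptions. The discrepancy is harmless downstream: with $\delta K D$ in place of $\frac{\delta K D}{n}$, the corresponding term in Theorem~\ref{thm:generalConvergence} becomes $2\delta K D(b+\mu)\sqrt{m}$, which is still $\order{\sqrt{m}}$ once $\delta \le \frac{1}{b+\mu}$, so Corollary~\ref{cor:accuracy-delay} and everything built on it stand. In short, your outline, carried out, proves the lemma with the constant that the paper's own argument actually supports.
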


The proof of Lemma~\ref{lem:averagedRegretBound} is given in Section~\ref{sec:averagedRegretBoundProof}. This addresses the first sum in~\eqref{eqn:intermediateRegret}. In order to bound the remaining terms, with $a(t) = \sqrt{\frac{t}{b + \mu}}$ as in the statement of the lemma, note that
\begin{align*}
\sum_{t=1}^{\frac{m}{b+\mu} - 1} \frac{1}{\beta(t)} &= \sum_{t=1}^{\frac{m}{b+\mu} - 1} \frac{1}{K + a(t)} \\
&\le \sum_{t=1}^{\frac{m}{b+\mu} - 1} \frac{1}{a(t)} \\
&\le \sqrt{b + \mu} \left(1 + \int_{1}^{\frac{m}{b+\mu}} t^{-1/2} dt\right) \\
&\le 2 \sqrt{m} \;,
\end{align*}
and 
\begin{align*}
\sum_{t=1}^{\frac{m}{b+\mu} - 1} \frac{1}{a(t) (K + a(t))^2} &\le \sum_{t=1}^{\frac{m}{b+\mu} - 1} \frac{1}{a(t)^3} \\
&= (b+\mu)^{3/2} \sum_{t=1}^{\frac{m}{b+\mu}} t^{-3/2} \\
&\le (b+\mu)^{3/2} \left(1 + \int_1^{\frac{m}{b+\mu}} t^{-3/2} dt \right) \\
&\le 3 (b+\mu)^{3/2} \;.
\end{align*}
Combining these inequalities with Lemma~\ref{lem:averagedRegretBound} and~\eqref{eqn:intermediateRegret} gives
\begin{align*}
\E[R_n(m)] &\le (b + \mu) \left[ F(\wbar(1)) - F(w^*) + \left(K + \frac{\sqrt{m}}{b + \mu}\right) h(w^*) + \frac{2 \sigma^2}{b} \sqrt{m} + \frac{3}{4} \delta^2 K^2 (b + \mu)^{3/2} + \frac{2 \delta K D}{n} \sqrt{m} \right] \\
&\quad + 2 \delta (b + \mu) L \sqrt{m} \\
&= (b + \mu) [F(\wbar(1)) - F(w^*) + K h(w^*)] + \frac{3}{4} \delta^2 K^2 (b + \mu)^{5/2} \\
&\quad + \left[ 2 \sigma^2 \frac{b + \mu}{b} + 2 \delta K D \frac{b + \mu}{n} + 2 \delta L (b + \mu)\right] \sqrt{m} \;,
\end{align*}
which completes the proof of Theorem~\ref{thm:generalConvergence}.

\subsection{Proof of Lemma~\ref{lem:averagedRegretBound}}
\label{sec:averagedRegretBoundProof}

We seek to upper-bound the sum
\[
\sum_{t=1}^{\frac{m}{b + \mu}} \E[F(\wbar(t)) - F(w^*)] \;,
\]
the regret of the averaged sequence of predictors $\{\wbar(t)\}$ after $\frac{m}{b+\mu}$ rounds. From the definition of $\zbar(t+1)$ in \eqref{eqn:zbar} and from~\eqref{eqn:z_i}, we have
\[
\zbar(t+1) = \zbar(t) + \gbar(t) \;,
\]
where
\begin{align*}
\gbar(t) &= \frac{1}{n} \sum_{i=1}^n g_i(t) \\
&= \frac{1}{b} \sum_{i=1}^n \sum_{s=1}^{\frac{b}{n}} \gradient f(w_i(t), x_i(t,s)) \;.
\end{align*}

Note that, for $\wbar(t)$ as defined in~\eqref{eqn:wbar}, $\wbar(t) \ne \frac{1}{n} \sum_{i=1}^n w_i(t)$ since the projection operator is not linear. Note also that the sequence $\wbar(t)$ is not equivalent to running a serial (i.e., single processor) version of dual averaging with mini-batches since the gradients $g_i(t)$ computed at each node (cf.,~\eqref{eqn:g_i}) are evaluated at different points, $w_i(t)$, whereas a serial implementation would use
\[
\ghat(t) = \frac{1}{b} \sum_{i=1}^n \sum_{s=1}^{\frac{b}{n}} \gradient f(\wbar(t), x_i(t,s)) \;,
\]
in place of $\gbar(t)$, with all gradients evaluated at $\wbar(t)$.

There are two sources of errors in the gradients which must be taken into account. The first source stems from the observation just mentioned, that the updates at different nodes involve evaluating gradients at different locations. The second source of errors is due to the randomness of the gradients $\gradient f(w, x)$ calculated using data $x$ rather than directly using the gradients $\gradient F(w)$. Let us define two gradient error vectors,
\begin{align*}
q(t) &\defeq \ghat(t) - \gradient F(\wbar(t)), \text{ and } \\
r(t) &\defeq \gbar(t) - \ghat(t) \;.
\end{align*}
These error vectors satisfy the properties stated in the following lemma that is proved in Appendix~\ref{sec:errorPropertiesProof}.

\begin{lem} \label{lem:errorProperties}
Let Assumptions~\ref{ass:constraintSet}--\ref{ass:boundedGradientVariance} hold. Then for $t = 1,\dots,\tfrac{m}{b+\mu}$,
\begin{align}
\E[\langle q(t), w^* - \wbar(t) \rangle] &= 0, \\
\E[\langle r(t), w^* - \wbar(t) \rangle] &\le \frac{KD}{n \beta(t)} \sum_{i=1}^n \E[ \norm{ z_i(t) - \zbar(t) } ], \\
\E[ \norm{ q(t) }^2] &\le \frac{\sigma^2}{b}, \\
\E[ \norm{ r(t) }^2] &\le \frac{K^2}{n^2 \beta(t)^2} \sum_{i=1}^n \sum_{j=1}^n \E[ \norm{ z_i(t) - \zbar(t) } \norm{ z_j(t) - \zbar(t) } ],
\end{align}
where $\E[\cdot]$ denotes the expectation with respect to the joint distribution over the set of random variables $X(\tfrac{m}{b+\mu})$.
\end{lem}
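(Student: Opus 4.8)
The plan is to treat the two error vectors separately. The claims concerning $q(t)$ (the first and third) are consequences of the zero-mean, bounded-variance nature of the sampled gradients and will follow from a conditioning argument, while the claims concerning $r(t)$ (the second and fourth) are essentially deterministic consequences of the Lipschitz continuity of $\gradient f$ combined with the Lipschitz continuity of the proximal projection (Lemma~\ref{lem:proxProjectionLipschitz}).

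For the $q(t)$ claims, the key structural fact I would establish first is a measurability/conditioning property. Since $\wbar(t)$ is defined in~\eqref{eqn:wbar} as the proximal projection of $\zbar(t)$, and $\zbar(t)$ depends only on the gradients computed in rounds $1,\dots,t-1$, the reference predictor $\wbar(t)$ is a deterministic function of $X(t-1)$; both $\wbar(t)$ and the (constant) minimizer $w^*$ are therefore $X(t-1)$-measurable, whereas the round-$t$ samples $x_i(t,s)$ are i.i.d.\ and independent of $X(t-1)$. Conditioning on $X(t-1)$ thus freezes $\wbar(t)$, so unbiasedness of $\gradient f$ gives $\E[\ghat(t)\mid X(t-1)] = \gradient F(\wbar(t))$, i.e.\ $\E[q(t)\mid X(t-1)] = 0$; pulling the $X(t-1)$-measurable vector $w^*-\wbar(t)$ outside the conditional expectation and invoking the tower property yields the first claim. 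For the third claim, I would write $q(t)$ as $\frac{1}{b}$ times a sum of $b = n\cdot\frac{b}{n}$ terms that, conditioned on $X(t-1)$, are independent and zero-mean; expanding $\norm{q(t)}^2$, the cross terms vanish in conditional expectation and each diagonal term is bounded by $\sigma^2$ via Assumption~\ref{ass:boundedGradientVariance}, giving $\E[\norm{q(t)}^2\mid X(t-1)] \le \frac{1}{b^2}\cdot b\,\sigma^2 = \frac{\sigma^2}{b}$, and then the tower property again.

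For the $r(t)$ claims, I would write $r(t) = \frac{1}{b}\sum_{i=1}^n\sum_{s=1}^{b/n}\big(\gradient f(w_i(t),x_i(t,s)) - \gradient f(\wbar(t),x_i(t,s))\big)$ and bound each summand in norm by $K\norm{w_i(t)-\wbar(t)}$ using the Lipschitz-gradient part of Assumption~\ref{ass:lossFunction}. Because $w_i(t)$ and $\wbar(t)$ are the proximal projections of $z_i(t)$ and $\zbar(t)$ with the same parameter $\beta(t)$, Lemma~\ref{lem:proxProjectionLipschitz} gives $\norm{w_i(t)-\wbar(t)} \le \frac{1}{\beta(t)}\norm{z_i(t)-\zbar(t)}$. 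Collapsing the inner sum over $s$ (which contributes a factor $b/n$ and reduces the prefactor $1/b$ to $1/n$), the triangle inequality yields the pointwise bound $\norm{r(t)} \le \frac{K}{n\beta(t)}\sum_{i=1}^n\norm{z_i(t)-\zbar(t)}$. Squaring this bound and expanding the square of the sum into the double sum $\sum_i\sum_j$ gives the fourth claim. The second claim then follows from Cauchy--Schwarz, $\langle r(t), w^*-\wbar(t)\rangle \le \norm{r(t)}\,\norm{w^*-\wbar(t)} \le D\,\norm{r(t)}$, using $w^*,\wbar(t)\in\constraintset$ and the diameter bound of Assumption~\ref{ass:constraintSet}; since both bounds hold pointwise, taking expectations preserves them.

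The norm manipulations are routine; the step requiring the most care is the conditioning argument behind the $q(t)$ claims. The subtlety is to confirm that $\wbar(t)$ depends only on past data, so that the fresh round-$t$ samples are conditionally independent of it and remain unbiased once $\wbar(t)$ is frozen; this requires tracing the dependence of $\wbar(t)$ through $\zbar(t)$ back to rounds $1,\dots,t-1$, and correctly counting that the mini-batch comprises exactly $b$ independent gradient samples. Once the filtration is set up, the first and third claims are immediate, and I expect no further obstacles.
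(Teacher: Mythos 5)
Your proposal is correct and follows essentially the same route as the paper's proof: a conditioning argument on $X(t-1)$ (unbiasedness and bounded variance of the sampled gradients) for the $q(t)$ claims, and Lipschitz continuity of the gradients combined with Lemma~\ref{lem:proxProjectionLipschitz} and the diameter bound for the $r(t)$ claims. The only differences are cosmetic: you derive the second claim from the pointwise bound $\norm{r(t)} \le \frac{K}{n\beta(t)}\sum_{i=1}^n \norm{z_i(t)-\zbar(t)}$ rather than first conditioning to replace $\gradient f$ by $\gradient F$ as the paper does, and your treatment of the third claim (all $b$ samples conditionally i.i.d., so every cross term vanishes at once) is, if anything, a cleaner justification than the paper's two-stage expansion.
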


Let us denote the linear (first-order) model for $F(w)$ evaluated at $\wbar(t)$ by
\[
l_t(w) \defeq F(\wbar(t)) + \langle \gradient F(\wbar(t)), w - \wbar(t) \rangle \;.
\]
We also make use of an approximation to $l_t(w)$ where $\gradient F(\wbar(t))$ is replaced with $\gbar(t)$,
\begin{align}
\lhat_t(w) &\defeq F(\wbar(t)) + \langle \gbar(t), w - \wbar(t) \rangle \nonumber \\
&= l_t(w) + \langle q(t), w - \wbar(t) \rangle + \langle r(t), w - \wbar(t) \rangle \;. \label{eqn:lhat}
\end{align}

From Assumption~\ref{ass:lossFunction}, we obtain that $F(w)$ is convex and has Lipschitz continuous gradients. It follows that (see, e.g., Lemma~1.2.3 in~\cite{NesterovLectures})
\begin{align*}
F(\wbar(t+1)) &\le l_t(\wbar(t+1)) + \frac{K}{2} \norm{ \wbar(t+1) - \wbar(t) } ^2 \\
&= \lhat_t(\wbar(t+1)) - \langle q(t), \wbar(t+1) - \wbar(t) \rangle - \langle r(t), \wbar(t+1) - \wbar(t) \rangle + \frac{K}{2} \norm{ \wbar(t+1) - \wbar(t) }^2 \;.
\end{align*}
Let $a(t) = \beta(t) - K$.
Applying the Cauchy-Schwarz inequality and adding and subtracting $a(t)$ in the term multiplying $\norm{ \wbar(t+1) - \wbar(t) }^2$ gives
\begin{align}
F(\wbar(t+1)) &\le l_t(\wbar(t+1)) + \frac{K + a(t)}{2} \norm{ \wbar(t+1) - \wbar(t) } ^2 \nonumber \\
&\quad + \norm{ q(t) } \, \norm{ \wbar(t+1) - \wbar(t) } - \frac{a(t)}{4} \norm{ \wbar(t+1) - \wbar(t) }^2 \nonumber \\
&\quad + \norm{ r(t) } \, \norm{ \wbar(t+1) - \wbar(t) } - \frac{a(t)}{4} \norm{ \wbar(t+1) - \wbar(t) }^2 \;. \label{eqn:intermediateFwbar}
\end{align}

Observe that
\begin{align*}
\norm{ q(t) } \,& \norm{ \wbar(t+1) - \wbar(t) } - \frac{a(t)}{4} \norm{ \wbar(t+1) - \wbar(t) }^2 \\
&= \frac{\norm{ q(t) }^2}{4 a(t)} - \left[ \frac{\norm{q(t)}}{\sqrt{4 a(t)}} - \sqrt{\frac{a(t)}{4}} \norm{ \wbar(t+1) - \wbar(t) } \right]^2 \\
&\le \frac{\norm{ q(t) }^2}{4 a(t)} \;, 
\end{align*}
and a similar bound holds if $q(t)$ is replaced by $r(t)$.

Using these bounds in~\eqref{eqn:intermediateFwbar} gives
\begin{align}
F(\wbar(t+1)) &\le \lhat_t(\wbar(t+1)) + \frac{K + a(t)}{2} \norm{ \wbar(t+1) - \wbar(t) } ^2 \nonumber \\
&\quad + \frac{\norm{q(t)}^2}{4 a(t)} + \frac{\norm{r(t)}^2}{4 a(t)} \;. \label{eqn:intermediateFwbar2}
\end{align}
Next, we make use of two lemmas to simplify this expression further. The first lemma gives a relation between $\wbar(t)$ and $\lhat_t(\wbar(t))$.

\begin{lem} \label{lem:wbar_lhat}
Let $\zbar(t)$ and $\wbar(t)$ be defined as in~\eqref{eqn:zbar} and~\eqref{eqn:wbar}. Then
\[
\wbar(t) = \arg\min_{w \in \constraintset} \left\{ \sum_{t'=1}^{t-1} \lhat_{t'}(w) + \beta(t) h(w) \right\} \;.
\]
\end{lem}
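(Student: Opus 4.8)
The plan is to show that the proximal projection defining $\wbar(t)$ in~\eqref{eqn:wbar} and the minimization of the accumulated linear models have objectives that differ only by a term which is constant in $w$, so that the two problems share the same minimizer.

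First I would unroll the dual recursion established at the start of this proof. Since $\zbar(t+1) = \zbar(t) + \gbar(t)$ and $\zbar(1) = \mathbf{0}_d$ (because $z_i(1) = \mathbf{0}_d$ at every node), telescoping gives
\[
\zbar(t) = \sum_{t'=1}^{t-1} \gbar(t') \;.
\]
Next I would expand the accumulated linear model. Recalling that $\lhat_{t'}(w) = F(\wbar(t')) + \langle \gbar(t'), w - \wbar(t') \rangle$, each summand is affine in $w$ with slope $\gbar(t')$, so grouping the $w$-dependent terms and collecting the remainder into a constant $c_t \defeq \sum_{t'=1}^{t-1} [ F(\wbar(t')) - \langle \gbar(t'), \wbar(t') \rangle ]$ yields
\[
\sum_{t'=1}^{t-1} \lhat_{t'}(w) = \left\langle \sum_{t'=1}^{t-1} \gbar(t'), \, w \right\rangle + c_t = \langle \zbar(t), w \rangle + c_t \;,
\]
where the last equality substitutes the telescoped identity for $\zbar(t)$.

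Finally, since $c_t$ does not depend on $w$ it does not affect the minimizer, and hence
\[
\arg\min_{w \in \constraintset} \left\{ \sum_{t'=1}^{t-1} \lhat_{t'}(w) + \beta(t) h(w) \right\} = \arg\min_{w \in \constraintset} \left\{ \langle \zbar(t), w \rangle + \beta(t) h(w) \right\} = \wbar(t) \;,
\]
the last equality being exactly the definition~\eqref{eqn:wbar} with the time index shifted by one. There is no genuine obstacle in this argument: this is just the standard observation that the dual averaging update is equivalent to minimizing the sum of linearized losses regularized by $\beta(t) h$. The only points requiring care are tracking which terms are constant in $w$ and noting that strong convexity of $h$ (hence of each objective) guarantees the minimizers are unique and well-defined, so that equating the two $\arg\min$ operations is legitimate.
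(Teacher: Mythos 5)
Your proof is correct and follows essentially the same route as the paper's: unroll the recursion $\zbar(t) = \sum_{t'=1}^{t-1} \gbar(t')$ and observe that the accumulated linearized objective $\sum_{t'=1}^{t-1} \lhat_{t'}(w) + \beta(t) h(w)$ differs from $\langle \zbar(t), w \rangle + \beta(t) h(w)$ only by terms constant in $w$, so the minimizers coincide. The only cosmetic difference is that you work directly with the compact form $\lhat_{t'}(w) = F(\wbar(t')) + \langle \gbar(t'), w - \wbar(t') \rangle$, whereas the paper expands the inner product into the three terms matching the decomposition of $\lhat_{t'}$ via $q(t')$ and $r(t')$; your added remark on uniqueness of the minimizer via strong convexity of $h$ is a sound, if implicit in the paper, justification for equating the two $\arg\min$ operations.
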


\begin{proof}
Since $z_i(1) = 0$ by definition, $\zbar(1) = 0$ also, and we have $\zbar(t) = \sum_{t'=1}^{t-1} \gbar(t')$. It follows that
\begin{equation}
\wbar(t) = \arg\min_{w \in \constraintset} \left\{ \left\langle \sum_{t'=1}^{t-1} \gbar(t'), w \right\rangle + \beta(t) h(w) \right\} \;. \label{eqn:wbarIntermediate}
\end{equation}
Now consider the inner product and observe that, by adding and subtracting $F(\wbar(t'))$ and $\ghat(t')$, we have
\begin{align*}
\left\langle \sum_{t'=1}^{t-1} \gbar(t'), w \right\rangle &= \sum_{t'=1}^{t-1} [ \langle \gradient F(\wbar(t')), w \rangle + \langle \ghat(t') - \gradient F(\wbar(t')), w \rangle + \langle \gbar(t') - \ghat(t'), w \rangle ] \;.
\end{align*}
Moreover, since $F(\wbar(t'))$ and $\wbar(t')$ do not depend on $w$, minimizing the objective in~\eqref{eqn:wbarIntermediate} is equivalent to minimizing
\[
\sum_{t'=1}^{t-1}[  F(\wbar(t')) + \langle \gradient F(\wbar(t')), w - \wbar(t') \rangle + \langle \ghat(s) - \gradient F(\wbar(t')), w - \wbar(t') \rangle + \langle \gbar(t') - \ghat(t'), w - \wbar(t') \rangle] + \beta(t) h(w)
\]
over $w$, which is what we wanted to show.
\end{proof}

The second lemma we will make use of is proved in~\cite{TsengAccelerated} and states a widely-used property of strongly convex functions.

\begin{lem} \label{lem:stronglyConvex}
Let $\constraintset$ be a closed convex set, let $\phi(w)$ be a convex function on $\constraintset$, and let $\htilde(w)$ be a $\mu$-strongly convex function defined on $\constraintset$. Define
\[
w^+ \defeq \arg\min_{w \in \constraintset} \{ \phi(w) + \htilde(w) \} \;.
\]
Then for all $w \in \constraintset$,
\[
\phi(w) + \htilde(w) \ge \phi(w^+) + \htilde(w^+) + \frac{\mu}{2} \norm{ w - w^+ }^2 \;.
\]
\end{lem}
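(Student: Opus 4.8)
The plan is to give a short, self-contained argument that avoids any appeal to differentiability, relying only on the Jensen-type definition of $\mu$-strong convexity (the same form stated earlier in the excerpt for the proximal function $h$), the optimality of $w^+$, and the convexity of $\constraintset$. First I would set $\psi(w) \defeq \phi(w) + \htilde(w)$ and observe that $\psi$ is itself $\mu$-strongly convex: adding the convexity inequality for $\phi$ to the $\mu$-strong convexity inequality for $\htilde$ shows that for all $w, w' \in \constraintset$ and $\theta \in [0,1]$,
\[
\psi(\theta w + (1-\theta)w') \le \theta \psi(w) + (1-\theta)\psi(w') - \frac{\mu}{2}\theta(1-\theta)\norm{w - w'}^2 \;.
\]

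Next, I would fix an arbitrary $w \in \constraintset$ and, for $\theta \in (0,1)$, consider the point $w_\theta \defeq \theta w + (1-\theta)w^+$, which lies in $\constraintset$ because $\constraintset$ is convex. Applying the displayed inequality with $w' = w^+$ gives an upper bound on $\psi(w_\theta)$, while the fact that $w^+$ minimizes $\psi$ over $\constraintset$ gives the lower bound $\psi(w^+) \le \psi(w_\theta)$. Chaining the two yields
\[
\psi(w^+) \le \theta\psi(w) + (1-\theta)\psi(w^+) - \frac{\mu}{2}\theta(1-\theta)\norm{w - w^+}^2 \;.
\]
I would then subtract $(1-\theta)\psi(w^+)$ from both sides, divide through by $\theta > 0$, and let $\theta \to 0^+$; since the right-hand side is affine (hence continuous) in $\theta$, the limit produces exactly $\psi(w) \ge \psi(w^+) + \frac{\mu}{2}\norm{w - w^+}^2$, which is the claimed inequality after substituting back $\psi = \phi + \htilde$.

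I do not expect any genuine obstacle here, as the argument is elementary; the only points needing minor care are verifying that the sum of a convex and a $\mu$-strongly convex function is again $\mu$-strongly convex (immediate from summing the two defining inequalities) and justifying the passage to the limit $\theta \to 0^+$ (routine, by continuity in $\theta$). An alternative route would use the first-order optimality condition $\langle g, w - w^+ \rangle \ge 0$ for some subgradient $g \in \partial\psi(w^+)$ together with the subgradient form of strong convexity, but this implicitly invokes a variational-inequality characterization of the constrained minimizer; the convex-combination argument above is cleaner and sidesteps subgradient machinery entirely. In any case, the same fact is established in the cited reference~\cite{TsengAccelerated}.
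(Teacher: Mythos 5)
Your proof is correct, but there is no internal argument in the paper to compare it against: the paper never proves this lemma, stating only that it ``is proved in~\cite{TsengAccelerated},'' so the citation is the entire proof. Your convex-combination argument is the standard elementary derivation and is sound at every step: the sum $\psi \defeq \phi + \htilde$ inherits $\mu$-strong convexity in the Jensen form (which is exactly the form in which strong convexity of the proximal function $h$ is defined in Section~\ref{sec:alg}, so your proof matches the paper's definitional conventions); the trial point $\theta w + (1-\theta)w^+$ lies in $\constraintset$ by convexity of the set; the minimizing property of $w^+$ supplies the lower bound $\psi(w^+) \le \psi(w_\theta)$; and dividing by $\theta$ and letting $\theta \to 0^+$ is legitimate because, after the division, the right-hand side $\psi(w) - \frac{\mu}{2}(1-\theta)\norm{w - w^+}^2$ is affine in $\theta$. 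Relative to outsourcing the proof to the literature, your argument buys self-containedness and a little extra generality: it needs no differentiability of $\phi$ or $\htilde$, no subgradient calculus, and no variational-inequality characterization of the constrained minimizer --- which is apt here, since the paper applies the lemma with $\phi$ a sum of affine functions $\lhat_{t'}$ and $\htilde$ a scalar multiple of $h$, about which only the Jensen-type inequality is assumed. The only hypothesis you rely on implicitly, existence of the minimizer $w^+$, is granted by the lemma statement itself, which defines $w^+$ via the argmin, so nothing is missing.
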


Now consider Lemma~\ref{lem:stronglyConvex} with $\phi(w) = \sum_{t'=1}^{t-1} \lhat_{t'}(w)$ and $\htilde(w) = (K + a(t)) h(w)$. Since $h(w)$ is $1$-strongly convex, $\htilde(w)$ is $(K + a(t))$-strongly convex. Thus
\begin{align*}
\frac{K + a(t)}{2} \norm{ \wbar(t+1) - \wbar(t) }^2 &\le \sum_{t'=1}^{t-1} \lhat_{t'}(\wbar(t+1)) + (K + a(t)) h(\wbar(t+1)) \\
&\quad - \sum_{t'=1}^{t-1} \lhat_{t'}(\wbar(t)) - (K + a(t)) h(\wbar(t)) \;.
\end{align*}
Using this bound in~\eqref{eqn:intermediateFwbar2} gives
\begin{align*}
F(\wbar(t+1)) &\le \sum_{t'=1}^t \lhat_{t'}(\wbar(t+1)) + (K + a(t)) h(\wbar(t+1)) \\
&\quad - \sum_{t'=1}^{t-1} \lhat_{t'}(\wbar(t)) - (K + a(t)) h(\wbar(t)) \\
&\quad + \frac{ \norm{ q(t) }^2}{4 a(t)} + \frac{ \norm{ r(t) }^2 }{4 a(t)} \;.
\end{align*}
Summing both sides over $t=1,\dots, \frac{m}{b+\mu} - 1$ leads to many terms on the right-hand side cancelling, and we are left with
\begin{align*}
\sum_{t=2}^{\frac{m}{b+\mu}} F(\wbar(t)) &\le \sum_{t=1}^{\frac{m}{b+\mu} - 1} \lhat_t(\wbar(\tfrac{m}{b+\mu})) + (K + a(t)) h(\wbar(\tfrac{m}{b+\mu})) \\
&\quad + \sum_{t=1}^{\frac{m}{b+\mu} -1} \frac{\norm{ q(t) }^2 + \norm{ r(t) }^2}{4 a(t)}\;.
\end{align*}
From Lemma~\ref{lem:wbar_lhat}, it follows that
\[
\sum_{t=1}^{\frac{m}{b+\mu} - 1} \lhat_t(\wbar(\tfrac{m}{b+\mu})) + (K + a(t)) h(\wbar(\tfrac{m}{b+\mu})) \le \sum_{t=1}^{\frac{m}{b+\mu} - 1} \lhat_t(w^*) + (K + a(\tfrac{m}{b+\mu})) h(w^*) \;,
\]
and so, replacing $\lhat_t(w)$ with the characterization given in~\eqref{eqn:lhat}, we have
\begin{align*}
\sum_{t=2}^{\frac{m}{b+\mu}} F(\wbar(t)) &\le \sum_{t=1}^{\frac{m}{b+\mu} - 1} l_t(w^*) + \sum_{t=1}^{\frac{m}{b+\mu}} [\langle q(t), w^* - \wbar(t) \rangle + \langle r(t), w^* - \wbar(t) \rangle ] \\
&\quad + (K + a(\tfrac{m}{b+\mu})) h(w^*) + \sum_{t=1}^{\frac{m}{b+\mu} - 1} \frac{\norm{q(t)}^2 + \norm{r(t)}^2}{4 a(t)} \\
&\le \left(\frac{m}{b+\mu} - 1\right) F(w^*) + \sum_{t=1}^{\frac{m}{b+\mu} -1} [ \langle q(t), w^* - \wbar(t) \rangle + \langle r(t), w^* - \wbar(t) \rangle ] \\
&\quad + (K + a(\tfrac{m}{b+\mu})) h(w^*) \sum_{t=1}^{\frac{m}{b+\mu}-1} \frac{\norm{q(t)}^2 + \norm{r(t)}^2}{4 a(t)} \;.
\end{align*}

Add $F(\wbar(1)) - F(w^*)$ to both sides, take the expectation with respect to the data $X(\frac{m}{b + \mu})$, and apply the bounds given in Lemma~\ref{lem:errorProperties} to obtain the desired bound,
\begin{align*}
\sum_{t=1}^{\frac{m}{b+\mu}} \E[F(\wbar(t)) - F(w^*)] &\le F(\wbar(1)) - F(w^*) + (K + a(\tfrac{m}{b+\mu})) h(w^*) + \sum_{t=1}^{\frac{m}{b+\mu}-1} \frac{\sigma^2 / b}{4 a(t)} \\
&\quad + \frac{K^2}{4 n^2} \sum_{t=1}^{\frac{m}{b+\mu} -1} \frac{1}{a(t) (K + a(t))^2} \sum_{i=1}^n \sum_{j=1}^n \E[ \norm{z_i(t) - \zbar(t)} \cdot \norm{z_j(t) - \zbar(t)}] \\
&\quad + \frac{KD}{n} \sum_{t=1}^{\frac{m}{b+\mu}-1} \frac{1}{K + a(t)} \E[\norm{z_i(t) - \zbar(t)}] \\
&\le F(\wbar(1)) - F(w^*) + (K + a(\tfrac{m}{b+\mu})) h(w^*) + \sum_{t=1}^{\frac{m}{b+\mu}-1} \frac{\sigma^2 / b}{4 a(t)} \\
&\quad + \frac{\delta^2 K^2}{4} \sum_{t=1}^{\frac{m}{b+\mu} -1} \frac{1}{a(t) (K + a(t))^2} \\
&\quad + \frac{\delta KD}{n} \sum_{t=1}^{\frac{m}{b+\mu}-1} \frac{1}{K + a(t)}  \;.
\end{align*}

\subsection{Proof of Theorem~\ref{thm:gossipRegretBound}} 
\label{sec:gossipRegretBoundProof}

Recall from Lemma~\ref{lem:gossipRates} that, given initial values $y_j^{(0)} \in \R^d$ at each node $j \in [n]$ and given a constant $\delta > 0$, the distance between the average $\ybar = \frac{1}{n} \sum_{j =1}^n y_j^{(0)}$ and the value $y_j^{(k)}$ at each node after $k$ iterations satisfies
\[
\max_j \norm{ y_j^{(k)} - \ybar } \le \delta
\]
if
\begin{equation}
k \ge \frac{\log\left( \frac{1}{\delta} \cdot 2 \sqrt{n} \cdot \max_j \norm{ y_j^{(0)} - \ybar } \right) }{1 - \lambda_2(P)} \;. \label{eqn:kbound}
\end{equation}
We wish to determine a condition on the number of gossip iterations $k$ such that the accuracy $\delta$ and delay $\mu$ satisfy $\delta \le \frac{1}{b + \mu} = \frac{1}{b + \gamma k}$.

When gossip is used for distributed averaging within the algorithm~\eqref{eqn:g_i}--\eqref{eqn:w_i} to compute the values $\{z_j(t+1)\}_{j \in [n]}$ at round $t+1$, we have initial value $y_j^{(0)} = z_j(t) + g_j(t)$ at node $j$, and thus $\ybar = \zbar(t) + \gbar(t)$, where
\[
\zbar(t) = \frac{1}{n} \sum_{j=1}^n z_j(t) \quad \text{ and } \quad \gbar(t) = \frac{1}{n} \sum_{j=1}^n g_j(t) \;.
\]
Note that $\norm{ \gbar(t) } \le L$ since $\norm{ g_j(t) } \le L$ for all $j \in [n]$ by Assumption~\ref{ass:lossFunction}. 

To show that the accuracy achieved is $\delta \le \frac{1}{b + \gamma k}$, we must show that $\norm{ z_j(t) - \zbar(t)} \le \delta$ for all $j \in [n]$ and all $t$. We will show this inductively, while also deriving the required number of gossip iterations $k$. For the induction base, note that the algorithm is initialized with $z_j(1) = 0$ for all $j \in [n]$, and so $\norm{z_j(1) - \zbar(1)} = 0$. For the induction step, assume that $\norm{z_j(t) - \zbar(t)} \le \delta$ for all $j \in [n]$. Then we have that
\begin{align*}
\max_j \norm{ y_j^{(0)} - \ybar } &= \max_j \norm{ \big(z_j(t) + g_j(t)\big) - \big(\zbar(t) - \gbar(t)\big) } \\
&\le \max_j \big\{ \norm{z_j(t) - \zbar(t)} + \norm{ g_j(t) - \gbar(t) } \big\} \\
&\le \max_j \{ \delta + 2 L\} \\
&\le \frac{1}{b + \gamma k} + 2 L \;.
\end{align*}
Using this together with the relation $\frac{1}{\delta} = b + \gamma k$ in~\eqref{eqn:kbound}, our task becomes that of finding the smallest integer $k$ such that
\[
k \ge \frac{\log\left( 2 \sqrt{n} (1 + 2 L b + 2 L \gamma k)) \right)}{1 - \lambda_2(P)} \;.
\]

For $x \in \R$, $x > \frac{- (1 + 2 L b)}{2 L \gamma}$, define the function
\[
\phi(x) \defeq \frac{\log\left( 2 \sqrt{n} (1 + 2 L b + 2 L \gamma x)) \right)}{1 - \lambda_2(P)} \;.
\]
Since $\frac{1}{1 - \lambda_2(P)}$, $\delta$, $L$, $n$, and $b$ are all positive constants by assumption, it follows that $\phi(0) > 0$. Therefore, $\phi(x)$ intersects the diagonal $\psi(x) = x$ at exactly one point $x > 0$. The exact solution to the equation $x = \phi(x)$ is
\[
x = \frac{- 1}{1 - \lambda_2(P)} \cdot W\left(- \frac{(1 - \lambda_2(P))}{4 L \gamma \sqrt{n}} \exp\left\{-\frac{(1 - \lambda_2(P)) (1 + 2 L b) }{ (2 L \gamma)} \right\} \right) - \frac{1 + 2 L b}{2 L \gamma} \;,
\]
where $W(\cdot)$ is the Lambert W-function.\footnote{The function $W(\cdot)$ is defined as the solution to the equation $W(x) e^{W(x)} = x$, and it has a real solution for $x \ge -1/e$; see~\cite{Corless96} for more.} Given $b$, $L$, $n$, $\gamma$, and $\lambda_2(P)$, one could numerically determine an appropriate value for $k$, but because $W(\cdot)$ is transcendental, this is not directly useful for obtaining the closed-form expression we desire.

Instead, since $\phi(x)$ is concave in $x$, it holds that for any $\xhat > \frac{- (1 + 2 L b)}{2 L \gamma}$,
\begin{align*}
\phi(x) &\le \phi(\xhat) + \phi'(\xhat) \cdot (x - \xhat)  \\
&= \frac{\log(2\sqrt{n} ( 1 + 2 L b + 2 L \gamma \xhat ))}{1 - \lambda_2(P)} + \frac{L \gamma (x - \xhat) }{(1 - \lambda_2(P))(1 + 2 L b + 2 L \gamma \xhat )} \;.
\end{align*}
Thus, $x \ge \phi(x)$ if $x \ge x^*$ where
\[
x^* = \left( \frac{\log(2 \sqrt{n} (1 + 2 L b + 2 L \gamma \xhat ))}{1 - \lambda_2(P)} - \frac{2 L \gamma \xhat}{(1 - \lambda_2(P)) (1 + 2 L b + 2 L \gamma \xhat) } \right) \cdot \frac{1}{1 - \frac{2 L \gamma}{(1 - \lambda_2(P)) (1 + 2 L b + 2 L \gamma \xhat)}} \;.
\]
Now, take
\[
\xhat = \frac{1}{2 L \gamma} \left(\frac{2 L b}{1 - \lambda_2(P)} - 1 - 2 L b \right) \;.
\]
Then
\begin{align*}
x^* &= \left( \frac{\log(4 \sqrt{n} L b)}{1 - \lambda_2(P)} + \frac{\log(\frac{1 }{ 1 - \lambda_2(P)} )}{1 - \lambda_2(P)} - \frac{1}{1 - \lambda_2(P)} + \frac{1}{2 L b} + 1 \right) \frac{1}{1 - \frac{\gamma}{b}} \\
&\le \left( \frac{\log(4 L b \sqrt{n} )}{1 - \lambda_2(P)} + \frac{\log(\frac{1}{1 - \lambda_2(P)})}{1 - \lambda_2(P)} + \frac{1}{2 L b} + 1 \right) \frac{1}{1 - \frac{\gamma}{b}} \;.
\end{align*}

Rounding up to the next largest integer gives the value of $k$ in the statement of the theorem. Thus, we have shown that $k^* = \lceil x^* \rceil$ satisfies the relation $k^* \ge \phi(k^*)$, and so 
\[
\max_j \norm{ z_j(t+1) - \zbar(t+1)} \le \frac{1}{b + \gamma k} \le \frac{1}{b + \mu}
\]
holds, completing the induction step. It follows from Corollary~\ref{cor:accuracy-delay} that the expected regret is bounded by $\E[R_n(m)] = \order{ b + \sqrt{m} }$, and so choosing $b = m^\rho$ for some $\rho \in (0, 1/2)$ gives that $\E[R_n(m)] = \order{ \sqrt{m} }$.

\subsection{Proof of Theorem~\ref{thm:stochasticOptimizationBound}}
\label{sec:stochasticOptimizationBoundProof}

Similar to the proof of Theorem~\ref{thm:generalConvergence}, we will bound the optimization error uniformly over all nodes by comparing the performance of the sequence $\{\what_i(t)\}$ to that of an appropriate averaged sequence. Specifically, define
\begin{equation}
\wbarhat(t) \defeq \frac{1}{t} \sum_{t'=1}^t \wbar(t') \;,
\end{equation}
where $\wbar(t')$ is as defined in~\eqref{eqn:wbar}.

Recall that the network collectively processes $b$ samples in each round of the algorithm, so that a total of $m = b T$ samples have been processed after $T$ rounds. We seek an upper bound on $\E[F(\what_i(\frac{m}{b})) - F(w^*)]$. Adding and subtracting $F(\wbarhat(\frac{m}{b}))$, applying Jensen's inequality, and using Lipschitz continuity of $F(\cdot)$ gives
\begin{align}
\E[F(\what_i(\tfrac{m}{b})) - F(w^*)] &= \E[F(\wbarhat(\tfrac{m}{b})) - F(w^*) + F(\what_i(\tfrac{m}{b})) - F(\wbarhat(\tfrac{m}{b}))] \nonumber \\
&\le \E\left[ \frac{b}{m} \sum_{t=1}^{\frac{m}{b}} [F(\wbar(t)) - F(w^*)] + \frac{b}{m} \sum_{t=1}^{\frac{m}{b}} [F(w_i(t)) - F(\wbar(t))] \right] \nonumber \\
&\le \E\left[ \frac{b}{m} \sum_{t=1}^{\frac{m}{b}} [F(\wbar(t)) - F(w^*)] + \frac{b L}{m} \sum_{t=1}^{\frac{m}{b}} \norm{ w_i(t) - \wbar(t) } \right] \;. \label{eqn:tmp0}
\end{align}

Applying Lemma~\ref{lem:averagedRegretBound}, we have that
\begin{align}
\E\left[ \frac{b}{m} \sum_{t=1}^{\frac{m}{b}} [F(\wbar(t)) - F(w^*)] \right] &\le \frac{b}{m} \left[ F(\wbar(1)) - F(w^*) + (K + a(\tfrac{m}{b})) h(w^*) + \frac{\sigma^2}{4 b} \sum_{t=1}^{\frac{m}{b}} \frac{1}{a(t)} \right. \nonumber \\
&\qquad\quad \left. + \frac{\delta^2 K^2}{4} \sum_{t=1}^{\frac{m}{b}} \frac{1}{a(t) (K + a(t))^2} + \frac{\delta K D}{n} \sum_{t=1}^{\frac{m}{b}} \frac{1}{K + a(t)} \right]. \label{eqn:tmp1}
\end{align}
Also, using Lipschitz continuity of the proximal projection operation (cf.~Lemma~\ref{lem:proxProjectionLipschitz}), we have that
\begin{align}
\E\left[ \frac{b L}{m} \sum_{t=1}^{\frac{m}{b}} \norm{ w_i(t) - \wbar(t) } \right] &\le \frac{b L}{m} \sum_{t=1}^{\frac{m}{b}} \frac{1}{\beta(t)} \norm{ z_i(t) - \zbar(t) } \;. \label{eqn:tmp2}
\end{align}
Combining \eqref{eqn:tmp0}, \eqref{eqn:tmp1}, and \eqref{eqn:tmp2}, and recalling that it is assumed in the statement of the theorem that the approximate distributed averaging protocol has accuracy $\norm{ z_i(t) - \zbar(t) } \le \delta \le \frac{1}{b}$, we obtain
\begin{align*}
\E[F(\what_i(\tfrac{m}{b})) - F(w^*)] &\le \frac{1}{T} [F(\wbar(1)) - F(w^*)] + \frac{1}{T} K h(w^*) + \frac{\sigma^2}{4 \sqrt{b T}} + \frac{3 K^2}{4 T \sqrt{b}} + \frac{2 K D}{n \sqrt{b T}} + \frac{2L}{\sqrt{bT}} \;.
\end{align*}
Using the fact that $b = C n$, and rearranging terms gives the desired result.

\section{Experiments}
\label{sec:experiments}

In this section we illustrate the performance of the proposed distributed dual averaging method through numerical experiments. The experiments are conducted on a cluster consisting of eight servers. Each server has two Intel Xeon 2.5GHz quad-core processors and 14GB of RAM. Our experiments make use of up to $n=64$ processes, with each process (i.e., node in the terminology of the previous sections) running on a separate core. The servers communicate over 1Gbps Ethernet connections.

We consider a multi-class classification task. In this task, each data point is a input-output pair, $(x,y) \in \mathcal{X} \times \mathcal{Y}$, with the input $x$ drawn from input space $\mathcal{X} \subseteq \mathbb{R}^d$, and the output $y$ drawn from a discrete output space $\mathcal{Y} = \{1, 2, \dots, M\}$ denoting the class of the data point. We use the MNIST digits dataset. Each input $x$ is a $28 \times 28$ greyscale image corresponding to one of the digits $0, 1, \dots, 9$. Hence, the input space has dimension $d = 784$ and there are $M=10$ classes.

We use multinomial logistic regression to address the multi-class classification task. In particular, the primal parameter $w$ is $M(d+1)$ dimensional and is broken into $M$ blocks, $w = [w_1^T, w_2^T, \dots, w_M^T]^T$, with $w_c \in \mathbb{R}^{d+1}$ being the parameters for the $c$th class. The loss function $f(w, (x,y))$ is defined as
\begin{equation*}
f(w, (x,y)) = \frac{1}{Z(w, x)} \exp\left\{w_{y,d+1} + \sum_{j=1}^d w_{y,d} x_d\right\},
\end{equation*}
where the normalization constant is given by
\begin{equation*}
Z(w, x) = \sum_{k=1}^M \exp\left\{w_{k,d+1} + \sum_{j=1}^d w_{k,d} x_d\right\}.
\end{equation*}

We report two sets of experiments, one in which we keep the mini-batch size $b$ fixed and vary the number of nodes, and the other where we fix the ratio $b/n$ (i.e., the number of data points processed by each node per round) and vary the number of nodes. In both experiments, gossip-based distributed dual averaging is implemented using the Matlab Parallel Computing Toolbox and Distributed Computing Server. The communication topology $G$ is obtained by sampling a random graph from the Erd\H{o}s-R\'{e}nyi model with $p=0.5$; all nodes have, on average, the same number of neighbours and the graph is approximately a constant-degree expander.

\subsection{Fixed mini-batch size}

In this experiment we fix the mini-batch size $b$ to be equal to the total number of data points available (roughly $50$k) and vary the number of nodes. Theorem~\ref{thm:gossipRegretBound} states that the regret after $T = \frac{m}{b+\mu}$ rounds should scale like $\mathbb{E}[R_n(m)] = \mathcal{O}(\sqrt{m})$. At each round, each node processes $b/n$ data points, so we have that $m = T \cdot \frac{b}{n} \cdot n = Tb$. Hence, the expected regret per data point should decay as $\frac{1}{m} \mathbb{E}[R_n(m)] = \mathcal{O}(\frac{1}{\sqrt{Tb}})$; i.e., this error only depends on the number of data points observed and not on the size of the network. We have observed exactly this in our experiments. Varying $n = 2^k$ for $k=2,3,\dots,6$, and plotting $\frac{1}{m} \mathbb{E}[R_n(T)]$ versus $T$, the resulting curves all lie directly on top of each other; the figure is omitted since it provides no additional insight. 

The benefit of using more nodes in this setting is that the workload of computing the batch gradient over $b$ data points is distributed over multiple processors, and so each round should require less time. Figure~\ref{fig:batch} shows $\frac{1}{m} \mathbb{E}[R_n(m)]$ as a function of run time for varying number of processors. Increasing the number of processors gives faster convergence to a lower average risk per data point, as expected. However, the benefit of adding more processors seems to obey a law of diminishing returns. This is related to the fact that using more processors requires more communication, and communication also requires time~\cite{TsianosNIPS2012}. 

In this experiment, the network only performs one consensus iteration per round, rather than the number prescribed by Theorem~\ref{thm:gossipRegretBound}. In other experiments, not reported here, we have observed that using additional consensus iterations does not significantly change the results.

\begin{figure}
\centering
\includegraphics[trim= 80 25 120 50,width=3in]{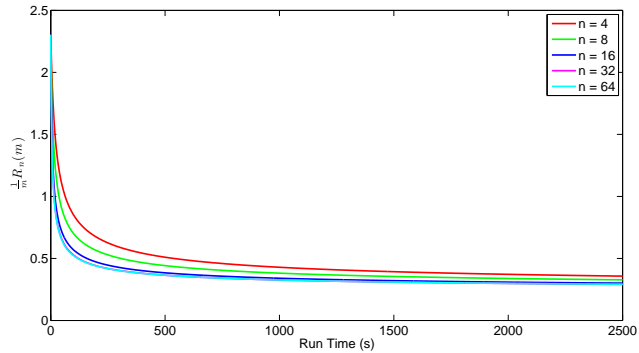}
\caption{Regret per data point as a function of time, for varying number of processors with fixed batch size $b=49984$.} \label{fig:batch}
\end{figure}

\subsection{Varying mini-batch size}

In the next set of experiments, we set the mini-batch size to be $b= 200 \times n$ (i.e., each node processes $200$ data points per round), and we again vary the number of nodes as $n=2^k$ for $k=2,3,\dots,6$. In this case, since each node processes the same amount of data per round, the time per round should be roughly constant, regardless of the network size. However, after a fixed number of rounds, $T$, the regret per data point $\frac{1}{m} \mathbb{E}[R_n(T)]$ should be lower for larger networks since they will have seen more cumulative data. In particular, with $b/n$ fixed constant, we have $\mathbb{E}[R_n(T)] = \mathcal{O}(\sqrt{T n})$, and so for $n' > n$, we expect that the ratio $R_n(T) / R_{n'}(T)$ should roughly behave as $\mathcal{O}(\sqrt{n / n'})$. The figure clearly shows that this behavior arises for sufficiently large $T$, after the transients have vanished and the $\mathcal{O}(\sqrt{m})$ behavior dominates. 

In these experiments, similar to above, only a single consensus iteration is performed over the network per round. In the proof of Theorem~\ref{thm:gossipRegretBound}, the number of consensus iterations given in \eqref{eqn:kthm} is required to ensure that the dual variables at all nodes are sufficiently close to the average of the dual variables, $\frac{1}{n} \sum_{i=1}^n z_i(t)$. In particular, this is the worst-case number of iterations required for any initial spread of the $z_i(t)$. However, in a typical execution of the algorithm, after processing one mini-batch of data the values at each node will not vary too greatly, and so the initial disagreement is not so large. Hence, we conjecture that an adaptive algorithm, which guarantees that the disagreement (i.e., deviation from the average) at each iteration is not too large, should be sufficient to obtain scaling results comparable to those in Theorem~\ref{thm:gossipRegretBound}.

\begin{figure}
\centering
\includegraphics[trim= 80 20 120 30,width=3in]{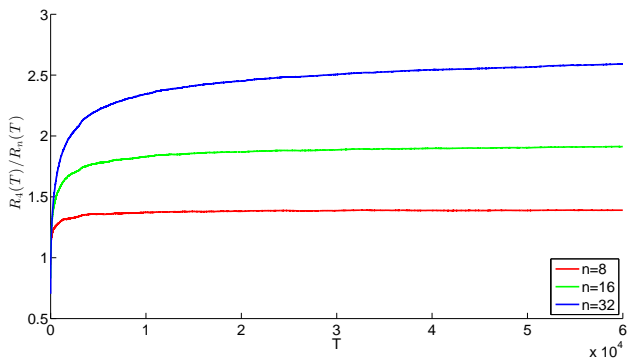}
\caption{Ratio of regrets $R_4(T) / R_n(T)$ versus number of rounds $T$ for $n=8, 16, 32$ with batch size $b = 200 n$ proportional to the network size. Note, $\sqrt{2} \approx 1.4$, $\sqrt{4} = 2$, and $\sqrt{8} \approx 2.8$.} \label{fig:online}
\end{figure}

\section{Discussion}
\label{sec:discussion}

In comparison to the distributed mini-batch algorithm~\cite{DekelMiniBatches}, we have shown that the proposed distributed dual averaging algorithm with approximate mini-batches achieves the same order-wise performance. In particular, if the communication topology is an expander graph and the mini-batch size is chosen to be $b = \Theta(\sqrt{m})$, then the expected regret is bounded as $\E[R_n(m)] = \order{ \sqrt{m} }$, and the time spent communicating in each round is of the order $\order{ \log n }$. This is comparable to the time spent performing exact distributed averaging using a protocol such as \textsc{AllReduce}, but gossip protocols can be considerably more robust to node and link failures~\cite{Jelasity2005,TsianosAllertonInvited}.

Raginsky et al.~\cite{Raginsky2011} and Yan et al.~\cite{Yan2013} study the related problem of distributed online convex optimization, where a different (deterministic) cost function $f_t$ is chosen by an adversary at each round, and the goal of the network is to minimize the regret with respect to the decision variable $w$ that would have minimized the overall cost $\sum_t f_t(w)$ in hindsight. The approach of Raginsky et al.~\cite{Raginsky2011} also builds on the dual averaging algorithm; however, rather than fusing information through gossip iterations, their approach involves the solution of a local dynamic program at each node.

This paper also makes contributions to the literature on gossip-based distributed optimization algorithms. The use of distributed averaging algorithms as a means of coordination in distributed optimization methods has a history tracing back to the seminal work of Tsitsiklis et al.~\cite{TsitsiklisAsyncGradOpt}. Distributed subgradient methods~\cite{nedicDistributedOptimization,distrStochSubgrOpt} that use gossip protocols to aggregate primal variables generally come with bounds guaranteeing that the optimality gap decays as $\order{ n^{3/2} / \sqrt{T}}$. The poor scaling in the network size appears since the analysis for these methods is effectively over the worst-case network topology. The distributed dual averaging algorithm described in~\cite{dualAveragingTAC} guarantees that the error is less than $\order{ \frac{\log(T \sqrt{n})}{\sqrt{T \cdot (1 - \lambda_2(P)) }} }$ after $T$ iterations. The number of transmissions per node is proportional to the number of iterations for the algorithms~\cite{nedicDistributedOptimization,distrStochSubgrOpt,dualAveragingTAC}. In contrast, for the proposed approach, the error decays as $\order{ \frac{1}{\sqrt{n T}} }$ where $T$ is the number of rounds, and in each round the network performs $\order{ \frac{\log(n)}{1 - \lambda_2(P) } }$ gossip iterations. Consequently, the overall communications required to reach accuracy $\epsilon$ is reduced by a factor of $\order{ n }$, as illustrated in Table~\ref{tab:comparison}. This is not surprising, since our analysis is based on stronger assumptions than~\cite{nedicDistributedOptimization,distrStochSubgrOpt,dualAveragingTAC}; in particular, we assume that the loss function has Lipschitz continuous gradients, whereas the previous work only studied loss functions which are convex, Lipschitz continuous, and have bounded gradients. The proposed method has a number of gossip iterations per round that depends on the spectral gap via $\frac{1}{1 - \lambda_2(P)}$. This can be immediately reduced to $\frac{1}{\sqrt{1 - \lambda_2(P)}}$ by using the accelerated distributed averaging method of~\cite{Oreshkin2010}, while only requiring that each node stores one extra copy of the dual variable being averaged.

\begin{table}[t]
\centering
\begin{tabular}{| c || c | c | c |} \hline
Method & Num.~Rounds ($T$) & Gossip Iter.~/ Round & Total Gossip Iter. \\ \hline \hline
\cite{nedicDistributedOptimization,distrStochSubgrOpt} & $\order{ \frac{n^3}{\epsilon^2} }$ & $1$ & $\order{ \frac{n^3}{\epsilon^2} }$ \\ \hline
\cite{dualAveragingTAC} & $\order{ \frac{1}{\epsilon^2} \frac{\log n}{\sqrt{1 - \lambda_2(P)}} }$ & $1$ & $\order{ \frac{1}{\epsilon^2} \frac{\log n}{\sqrt{1 - \lambda_2(P)}} }$ \\ \hline
This Paper & $\order{\frac{1}{n \epsilon^2}}$ & $\order{ \frac{\log n}{1 - \lambda_2(P)} }$ & $\order{ \frac{1}{\epsilon^2} \frac{\log n}{n (1 - \lambda_2(P))} }$ \\ \hline
\end{tabular}
\caption{Comparing the number of rounds and gossip iterations required to guarantee an optimality gap of $\epsilon$ for distributed stochastic optimization.} \label{tab:comparison}
\end{table}

As a side remark, we point out that faster gossip-based algorithms exist for more restricted function classes. Recall that in the setting of stochastic optimization the constant multiplying the $\order{ \frac{1}{\sqrt{T}} }$ term in single-processor optimality gap bounds is proportional to the standard deviation of the stochastic gradient magnitudes~\cite{Lan2012,xiaoDA}. In the deterministic, single-processor setting, and under the assumption that the loss function has Lipschitz continuous gradients, it is known that significantly faster rates are possible~\cite{NemirovskyYudinBook}. A gossip-based algorithm which achieves rates of $G_n(T) = \order{ \frac{1}{T^2} }$ is described in~\cite{FastDistributedGradMethods}. Similar to the gossip-based approach proposed in this paper, the approach in~\cite{FastDistributedGradMethods} also requires performing multiple gossip iterations per round. Unlike the approach in this paper, however, the number of gossip iterations per round increases as more rounds are performed; specifically, the number of gossip iterations required in round $t$ is $\order{ \frac{\log(t)}{1 - \lambda_2(P)} }$. In addition, the method of~\cite{FastDistributedGradMethods} must know the values of the gradient Lipschitz constant $K$ and the spectral gap $1 - \lambda_2(P)$ in order to determine a step-size sequence that achieves these bounds.

Finally, we note that similar rates for distributed stochastic optimization are achieved by a method based on a master-worker architecture~\cite{delayedDistrOpt}. In this architecture, the master maintains and updates the authoritative copy of the optimization variables. The workers compute mini-batch gradients and report these back to the master, where they are used to update the optimization variables. The challenge in this setup is that the workers report gradients evaluated at out-dated values of $w(t)$. The analysis in~\cite{delayedDistrOpt} involves bounding and controlling the additional error due to these delayed evaluations, and the resulting bounds are, at best, of the form $\Delta_n(T) = \order{\frac{n}{T} + \frac{1}{\sqrt{nT}}}$ or $\order{ \frac{1}{T^{2/3}} + \frac{1}{\sqrt{nT}} }$, depending on the version of the algorithm considered. There are also potential practical issues which must be addressed when implementing a master-worker architecture; for example, the master may become a bottleneck and single point of failure.

\section{Conclusion and Future Work}
\label{sec:conc}

The main message of this paper is that exact distributed averaging is not essential to achieve optimal rates with mini-batch methods for distributed online prediction and stochastic optimization. This opens the door to the use of gossip protocols for distributed averaging, which can be considerably more robust than exact distributed averaging protocols and can also be implemented in a completely asynchronous manner. A natural extension, and an important part of our ongoing work, is to analyze a completely asynchronous version of the distributed dual averaging algorithm with approximate mini-batch calculations. In the asynchronous setting, nodes may send and receive a random number of gossip messages, and they may also update their local copies of the decision variables at different rates. There are two major challenges in this setting. The first is to ensure that performance is not seriously degraded if, occasionally, an insufficient amount of gossiping occurs between subsequent updates. The second is to properly adapt the algorithm parameters $\beta(t)$ in such a manner that the asynchronous algorithm still converges at the optimal rate.

The algorithm and analysis in this paper are appropriate when the loss function is convex and has Lipschitz continuous gradients. For that class of problems, the proposed algorithm achieves the best rates possible. However, for other classes of loss functions --- in particular, for strongly convex loss functions --- it is known that better rates (e.g., $\E[R_1(m)] = \order{\log m}$) are achievable in the single-processor setting~\cite{largeScaleSGD,logRegRepeatedGames,TsengAccelerated}. At present, there are no known distributed optimization methods that achieve comparable rates while also exhibiting scalability in terms of the network size.

\appendix

\section{Proof of Lemma~\ref{lem:gossipRates}} \label{app:gossipRatesProof}

The proof of Lemma~\ref{lem:gossipRates} makes use of the following fact for stochastic matrices; see, e.g.,~\cite{HornJohnson}.

\begin{lem} \label{lem:eigenvalueBound}
Let $P$ be an $n \times n$ doubly-stochastic matrix with eigenvalues $\lambda_1(P) = 1 \ge \lambda_2(P) \ge \dots \ge \lambda_n(P) \ge 0$, and let $z$ be a vector in the $n$-dimensional unit simplex. Then 
\[
\norm{ z^T P^k - \frac{1}{n} \mathbf{1}_n^T }_1 \le \lambda_2(P)^k \sqrt{n} \;.
\]
\end{lem}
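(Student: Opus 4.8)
The plan is to reduce the claim to a clean spectral bound on the matrix $P$ with its dominant mode removed, and then pay a factor of $\sqrt{n}$ to pass from the $\ell_2$ to the $\ell_1$ norm. Let $J \defeq \frac{1}{n}\mathbf{1}_n\mathbf{1}_n^T$ denote the projection onto the all-ones direction. The first step is to rewrite the quantity of interest purely in terms of $P-J$. Since $z$ lies in the unit simplex we have $\mathbf{1}_n^T z = 1$, so $z^T J = \frac{1}{n}(\mathbf{1}_n^T z)\,\mathbf{1}_n^T = \frac{1}{n}\mathbf{1}_n^T$, which lets me replace the target vector by $z^T P^k - z^T J = z^T(P^k - J)$. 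Because $P$ is doubly stochastic, $\mathbf{1}_n$ is both a left and right eigenvector with eigenvalue $1$, whence $PJ = JP = J$ and $J^2 = J$; a one-line induction then gives $(P-J)^k = P^k - J$ for every $k \ge 1$. Thus it suffices to bound $\norm{z^T (P-J)^k}_1$.

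The second step estimates the Euclidean norm $\norm{z^T(P-J)^k}_2$ using the spectral structure. Treating $P$ as symmetric (as it is in the gossip setting, where the weights respect an undirected graph), I would diagonalize $P = \sum_{i=1}^n \lambda_i u_i u_i^T$ in an orthonormal eigenbasis with $u_1 = \frac{1}{\sqrt n}\mathbf{1}_n$, so that $J = u_1 u_1^T$ and $P^k - J = \sum_{i \ge 2}\lambda_i^k u_i u_i^T$. Expanding $z^T(P^k-J) = \sum_{i\ge 2}\lambda_i^k (z^T u_i)\,u_i^T$ and using orthonormality gives
\[
\norm{z^T(P^k - J)}_2^2 = \sum_{i \ge 2}\lambda_i^{2k}(z^T u_i)^2 \le \lambda_2(P)^{2k}\sum_{i=1}^n (z^T u_i)^2 = \lambda_2(P)^{2k}\norm{z}_2^2,
\]
where the inequality uses the hypothesized ordering $\lambda_2(P) \ge \cdots \ge \lambda_n(P) \ge 0$, so that $\lambda_2(P)$ dominates $\abs{\lambda_i}$ for all $i \ge 2$ (this is precisely where the nonnegativity $\lambda_n(P)\ge 0$ is used, to rule out a large negative eigenvalue). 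Since $z$ is in the simplex, $\norm{z}_2 \le \norm{z}_1 = 1$, giving $\norm{z^T(P-J)^k}_2 \le \lambda_2(P)^k$.

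The final step converts back to the $\ell_1$ norm via Cauchy--Schwarz, $\norm{v}_1 \le \sqrt{n}\,\norm{v}_2$ for $v \in \R^n$, yielding $\norm{z^T P^k - \frac{1}{n}\mathbf{1}_n^T}_1 \le \sqrt{n}\,\lambda_2(P)^k$ as claimed. I do not expect any genuinely hard obstacle here: the only point requiring care is the reliance on the orthonormal eigenbasis, which is why I would either invoke symmetry of $P$ explicitly or cite the cited fact from~\cite{HornJohnson}; if one wished to avoid symmetry, the identity $(P-J)^k = P^k - J$ together with a submultiplicative bound on the operator norm of $P-J$ would still carry the argument, at the cost of replacing $\lambda_2(P)$ by the relevant singular-value or spectral-radius quantity. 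The $\sqrt n$ factor is intrinsic and simply reflects the worst-case gap between $\ell_1$ and $\ell_2$ norms in $\R^n$.
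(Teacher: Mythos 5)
Your proof is correct, but it cannot be ``the same approach as the paper'' for a simple reason: the paper never proves this lemma at all --- it states it as a known fact for stochastic matrices and cites Horn and Johnson. So your contribution is a complete, self-contained argument where the paper has only a citation. Your route is the standard mixing-time argument: reduce to $P^k - J$ with $J = \frac{1}{n}\mathbf{1}_n\mathbf{1}_n^T$ (using $z^T J = \frac{1}{n}\mathbf{1}_n^T$ for $z$ in the simplex), expand in an orthonormal eigenbasis to get the $\ell_2$ bound $\norm{z^T(P^k - J)}_2 \le \lambda_2(P)^k \norm{z}_2 \le \lambda_2(P)^k$, and pay the $\sqrt{n}$ to convert to $\ell_1$; this yields exactly the stated constant, and you correctly identify that $\lambda_n(P) \ge 0$ is what rules out a dominant negative eigenvalue. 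The one delicate point is the one you flag yourself: the orthonormal eigendecomposition requires $P$ to be symmetric (or at least normal), which neither the lemma statement nor the paper's Assumption~\ref{ass:gossipWeights} explicitly imposes --- the paper only requires $P$ doubly stochastic with sparsity matching the undirected graph $G$. For a genuinely non-normal $P$, even one whose eigenvalues happen to be real, ordered, and nonnegative as hypothesized, the step $\norm{z^T(P^k-J)}_2 \le \lambda_2(P)^k\norm{z}_2$ can fail (non-normal matrices exhibit transient growth not controlled by eigenvalues alone), so your fallback of replacing $\lambda_2(P)$ by a singular-value quantity would genuinely change the statement. The honest resolution is that the lemma's eigenvalue hypothesis only really makes sense in the symmetric/normal setting, which is the standard gossip setup; stating symmetry explicitly, as you propose, repairs a looseness that is inherited from the paper rather than introduced by you.
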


Recall that $\ybar = \frac{1}{n} \sum_{j=1}^n y_j^{(0)}$. After $k$ synchronous iterations of gossip, for all $i \in [n]$ it holds that 
\begin{align*}
y_i^{(k)} - \ybar &= \sum_{j=1}^n [P^k]_{i,j} (y_j^{(0)} - \ybar) \\
&= \sum_{j=1}^n \left([P^k]_{i,j} - \frac{1}{n}\right) y_j^{(0)} \\
&= \sum_{j=1}^n \left([P^k]_{i,j} - \frac{1}{n}\right) \left(y_j^{(0)}  - \ybar \right),
\end{align*}
where the last line follows since $P^k$ is also doubly stochastic, and so $\sum_{j=1}^n [P^k]_{i,j} = \sum_{i=1}^n \frac{1}{n} = 1$.

Applying the triangle inequality gives
\begin{align*}
\norm{ y_i^{(k)} - \ybar } &\le \sum_{j=1}^n \abs{ [P^k]_{i,j} - \frac{1}{n} } \cdot \norm{ y_j^{(0)} - \ybar } \\
&\le \left(\max_j \norm{ y_j^{(0)} - \ybar } \right) \norm{ \bm{e}_i^T P^k - \frac{1}{n} \mathbf{1}_n }_1,
\end{align*}
where $\bm{e}_i$ denotes the $i$th canonical vector, and where $\norm{\cdot}_1$ denotes the $\ell_1$ norm; $\norm{x}_1 = \sum_i |x_i|$.\footnote{I.e., $\bm{e}_i \in \{0,1\}^n$ has $j$th entry $[\bm{e}_i]_j = 1$ if $j = i$, and $[\bm{e}_i]_j = 0$ otherwise.} Clearly $\bm{e}_i$ lies in the unit simplex. Applying the result from Lemma~\ref{lem:eigenvalueBound}, and noting that $\lambda_2(P^k) = \lambda_2(P)^k$, we have that
\begin{align*}
\norm{ y_i^{(k)} - \ybar } &\le 2 \sqrt{n} \lambda_2(P)^k \cdot \max_j \norm{ y_j^{(0)} - \ybar }.
\end{align*}
It follows that $\norm{ y_i^{(k)} - \ybar } \le \delta$ if
\[
k \ge \frac{\log\left( \frac{1}{\delta} \cdot 2 \sqrt{n} \cdot \max_j \norm{ y_j^{(0)} - \overline{y} }\right)}{\log(1 / \lambda_2(P))} \;.
\]
The claim of the lemma follows by recalling that $\log(1/x) \ge 1 - x$ for $x > 0$.

\section{Proof of Lemma~\ref{lem:errorProperties}}
\label{sec:errorPropertiesProof}

We  prove the four relationships in the order they are stated in the lemma. First consider $\Exp{}{\dprod{q(t)}{w^* - \overline{w}(t)}}$. Notice that $\overline{w}(t)$ is a deterministic function given $X(t-1)$ so
\begin{align*}
\Exp{}{f(\overline{w}(t), x_i(t))} &= \Exp{}{ \Exp{}{ f(\overline{w}(t), x_i(t)) \vert X(t-1) } } = \Exp{}{F(\overline{w}(t))}.
\end{align*}
Also, under Assumption~\ref{ass:lossFunction}, the gradient and expectation operators commute~\cite{RockafellarWetts}, and so we have
\begin{equation*}
\Exp{}{\dprod{\nabla f(\overline{w}(t), x_j(t))}{ \overline{w}(t) }} = \Exp{}{\dprod{\nabla F(\overline{w}(t))}{ \overline{w}(t) }}.
\end{equation*}
Therefore, for the first expression in the lemma, we get
\begin{align*}
\Exp{}{\dprod{q(t)}{w^* - \overline{w}(t)}} &= \Exp{}{\dprod{\widehat{g}(t) - \nabla F(\overline{w}(t))}{w^* - \overline{w}(t)} } \\
&= \Exp{}{\dprod{\frac{1}{b}\sum_{i=1}^n \sum_{s=1}^{\frac{b}{n}} \nabla f(\overline{w}(t), x_i(t,s)) - \nabla F(\overline{w}(t))}{w^* - \overline{w}(t)} } \\
&= \Exp{}{\dprod{\frac{1}{b}\sum_{i=1}^n \sum_{s=1}^{\frac{b}{n}} \nabla F(\overline{w}(t)) - \nabla F(\overline{w}(t))}{w^* - \overline{w}(t)} } \\
&= 0.
\end{align*}

For the second expression, using similar conditioning arguments gives
\begin{align*}
\Exp{}{\dprod{r(t)}{w^* - \overline{w}(t)}} &= \Exp{}{ \dprod{\frac{1}{b}\sum_{i=1}^n \sum_{s=1}^{\frac{b}{n}} \nabla f(w_i(t), x_i(t,s)) - \frac{1}{b}\sum_{i=1}^n \sum_{s=1}^{\frac{b}{n}} \nabla f(\overline{w}(t), x_i(t,s))}{w^* - \overline{w}(t)} } \\
&= \Exp{}{ \dprod{\frac{1}{b}\sum_{i=1}^n \sum_{s=1}^{\frac{b}{n}} \nabla F(w_i(t)) - \frac{1}{b}\sum_{i=1}^n \sum_{s=1}^{\frac{b}{n}} \nabla F(\overline{w}(t))}{w^* - \overline{w}(t)} } \\
&= \frac{1}{n}\sum_{i=1}^n \Exp{}{ \dprod{ \nabla F(w_i(t)) -   \nabla F(\overline{w}(t))}{w^* - \overline{w}(t)} } \;.
\end{align*}
Then, applying the Cauchy-Schwarz inequality and using Assumptions~\ref{ass:constraintSet} and~\ref{ass:lossFunction} gives
\begin{align*}
\Exp{}{\dprod{r(t)}{w^* - \overline{w}(t)}} & \leq \frac{1}{n}\sum_{i=1}^n \Exp{}{ \norm{ \nabla F(w_i(t)) -   \nabla F(\overline{w}(t))} \cdot \norm{w^* - \overline{w}(t)} } \\
& \leq \frac{1}{n}\sum_{i=1}^n \Exp{}{ K \norm{ w_i(t) - \overline{w}(t)} \cdot D } \;.
\end{align*}
Finally, applying Lemma~\ref{lem:proxProjectionLipschitz} gives the desired result,
\begin{align*}
\Exp{}{\dprod{r(t)}{w^* - \overline{w}(t)}} & \leq \frac{K D}{n \beta(t)}\sum_{i=1}^n \Exp{}{  \norm{ z_i(t) -  \overline{z}(t)}} \;.
\end{align*}

The third expression follows by again using similar conditioning arguments and by applying Assumption~\ref{ass:boundedGradientVariance}. In particular, recalling that the data are i.i.d., and hence $x_i(t,s)$ is independent of $x_j(t,s)$, we have
\begin{align*}
\Exp{}{\lVert q(t) \rVert^2} &= \Exp{}{\norm{\hat{g}(t) - \nabla F(\overline{w}(t)) }^2} \\
&= \Exp{}{\norm{\frac{1}{b} \sum_{i=1}^n \sum_{s=1}^{\frac{b}{n}} \nabla f(\overline{w}(t), x_i(t,s)) - \nabla F(\overline{w}(t)) }^2} \\
&= \Exp{}{\norm{\frac{1}{n} \sum_{i=1}^n \frac{n}{b} \sum_{s=1}^{\frac{b}{n}} \nabla f(\overline{w}(t), x_i(t,s)) - \nabla F(\overline{w}(t)) }^2} \\
&= \Exp{}{\frac{1}{n^2} \sum_{i=1}^n  \sum_{j=1}^n  \dprod{\frac{n}{b} \sum_{s=1}^{\frac{b}{n}} \nabla f(\overline{w}(t), x_i(t,s)) - \nabla F(\overline{w}(t)) }{\frac{n}{b} \sum_{s=1}^{\frac{b}{n}} \nabla f(\overline{w}(t), x_j(t,s)) - \nabla F(\overline{w}(t))}} \\
&= \frac{1}{n^2} \sum_{i=1}^n \sum_{j=1}^n \Exp{}{ \dprod{\frac{n}{b} \sum_{s=1}^{\frac{b}{n}} \nabla f(\overline{w}(t), x_i(t,s)) - \nabla F(\overline{w}(t)) }{\frac{n}{b} \sum_{s=1}^{\frac{b}{n}} \nabla f(\overline{w}(t), x_j(t,s)) - \nabla F(\overline{w}(t))}} \\
&= \frac{1}{n^2} \sum_{i=1}^n  \Exp{}{ \norm{\frac{n}{b} \sum_{s=1}^{\frac{b}{n}} \nabla f(\overline{w}(t), x_i(t,s)) - \nabla F(\overline{w}(t)) }^2} \;.
\end{align*}
Then, applying the triangle inequality and using Assumption~\ref{ass:boundedGradientVariance} gives
\begin{align*}
\Exp{}{\lVert q(t) \rVert^2}&\leq \frac{1}{n^2} \sum_{i=1}^n  \frac{n^2}{b^2} \sum_{s=1}^{\frac{b}{n}} \Exp{}{ \norm{\nabla f(\overline{w}(t), x_i(t,s)) - \nabla F(\overline{w}(t)) }^2} \\
&\leq \frac{1}{b^2} \sum_{i=1}^n  \sum_{s=1}^{\frac{b}{n}} \sigma^2 \\
& \leq \frac{\sigma^2}{b} \;.
\end{align*}

Finally, to verify the fourth expression, we first multiply and divide by $n$ and use the triangle inequality to see that
\begin{align*}
\Exp{}{\lVert r(t) \rVert^2} &= \Exp{}{\norm{ \frac{1}{b}\sum_{i=1}^n \sum_{s=1}^{\frac{b}{n}}\nabla f(w_i(t), x_i(t,s)) - \nabla f(\overline{w}(t), x_i(t,s))}^2} \\
&= \Exp{}{\norm{ \frac{1}{n}\sum_{i=1}^n \frac{n}{b}\sum_{s=1}^{\frac{b}{n}}\nabla f(w_i(t), x_i(t,s)) - \nabla f(\overline{w}(t), x_i(t,s))}^2} \\
&\leq \Exp{}{  \left(  \frac{1}{n}\sum_{i=1}^n \frac{n}{b}\sum_{s=1}^{\frac{b}{n}} \norm{\nabla f(w_i(t), x_i(t,s)) - \nabla f(\overline{w}(t), x_i(t,s))} \right)^2} \;.
\end{align*}
Then applying Assumption~\ref{ass:lossFunction} and Lemma~\ref{lem:proxProjectionLipschitz} and simplifying gives the desired result,
\begin{align*}
\Exp{}{\lVert r(t) \rVert^2} &\leq \Exp{}{  \left(  \frac{1}{n}\sum_{i=1}^n \frac{n}{b}\sum_{s=1}^{\frac{b}{n}} K \norm{ w_i(t) - \overline{w}(t)} \right)^2} \\
&\leq \Exp{}{  \left(  \frac{1}{n}\sum_{i=1}^n  \frac{K}{\beta(t)} \norm{ z_i(t) - \overline{z}(t)} \right)^2} \\
&= \Exp{}{  \left(  \frac{K}{n \beta(t)}\sum_{i=1}^n   \norm{ z_i(t) - \overline{z}(t)} \right)^2} \\
&= \Exp{}{   \frac{K^2}{n^2 \beta(t)^2} \sum_{i=1}^n \sum_{j=1}^n   \norm{ z_i(t) - \overline{z}(t)} \norm{ z_j(t) - \overline{z}(t)} }  \;.
\end{align*}

\bibliographystyle{abbrv}
\bibliography{../PhDThesis/References}

\end{document}